\documentclass[11pt]{article}
\usepackage[margin=1in]{geometry}
\usepackage{amsmath,amsthm,amsfonts,amssymb}
\usepackage{enumerate}
\usepackage[authoryear,round]{natbib}
\usepackage{enumitem}
\usepackage{graphicx}
\usepackage{booktabs}
\usepackage{float}

\PassOptionsToPackage{hyphens}{url}
\usepackage{url}
\usepackage{xcolor}
\usepackage{hyperref}
\hypersetup{
  colorlinks = true,
  linkcolor  = [rgb]{0.10,0.20,0.55},
  citecolor  = [rgb]{0.10,0.20,0.55},
  urlcolor   = [rgb]{0.10,0.20,0.55}
}

\usepackage{mathrsfs}

\allowdisplaybreaks
\newtheorem{theorem}{Theorem}[section]
\newtheorem{corollary}[theorem]{Corollary}
\newtheorem{lemma}[theorem]{Lemma}
\newtheorem{proposition}[theorem]{Proposition}
\theoremstyle{definition}

\newtheorem{assumption}[theorem]{Assumption}
\theoremstyle{remark}
\newtheorem{remark}[theorem]{Remark}

\newcommand{\R}{\mathbb{R}}
\newcommand{\C}{\mathbb{C}}
\newcommand{\E}{\mathbb{E}}
\newcommand{\Q}{\mathbb{Q}}

\newcommand{\F}{\mathcal{F}}

\newcommand{\Rr}{\mathcal{R}}

\newcommand{\mut}{\widetilde\mu}

\newcommand{\dd}{\mathrm{d}}

\newcommand{\cK}{\mathsf{K}}

\title{Variance-Optimal Hedging in the Rough Hawkes--Heston Model}
\author{Yingli Wang\thanks{School of Mathematical Sciences, Fudan University,
  Shanghai, People's Republic of China;
  \texttt{yingliwang@fudan.edu.cn}; numerical code:
  \url{https://github.com/gagawjbytw/hedging_rough_hawkes_heston}}
  \and
  Xiaoyu Wang\thanks{Corresponding Author. FinTech Thrust, Hong Kong University of Science and Technology (Guangzhou), Guangzhou, Guangdong Province, People's Republic of China;
\texttt{xiaoyuwang@hkust-gz.edu.cn}}}
\date{\today}

\begin{document}
\maketitle

\begin{abstract}
We study variance-optimal stock hedging and the convergence of approximate
strategies in the rough Hawkes--Heston model. Starting from the model's affine
conditional transform and the affine Volterra jump framework, we obtain
semi-explicit hedges for European calls and
a representation of the minimum quadratic error through the
Galtchouk--Kunita--Watanabe projection. Our main approximation result keeps
the original stock, variance driver, and information flow fixed while
regularizing the kernel used to evaluate the hedge. To handle singular
memory and common marked jumps, we construct the approximate holdings from
histories available before trading and preserve the conditional transform's
random modulus envelope. Riccati--Volterra stability and weighted truncation
then yield convergence in the original stock's trading norm on compact
Fourier intervals. For calls, a joint choice of kernel regularization and
Fourier cutoff gives convergence of the initial capitals and strategies,
uniform-in-time square-mean convergence of continuous-time gains, and
convergence of the terminal mean-square error to the variance-optimal value.
A numerical experiment with shifted fractional kernels illustrates the
construction on common original-market paths.
\end{abstract}

\medskip
\noindent\textbf{Keywords:}
quadratic hedging; variance-optimal hedging; rough Hawkes--Heston model;
affine Volterra processes; common jumps; kernel regularization;
hedge convergence.

\medskip
\noindent\textbf{2020 Mathematics Subject Classification:}
Primary 91G20, 91G80; Secondary 60G55, 60H20, 60G57.

\section{Introduction}
\label{sec:introduction}

The Heston model is a classical affine stochastic-volatility benchmark with
tractable Markovian dynamics \citep{heston1993}. Continuous-time
long-memory volatility models were developed by \citet{comterenault1998}, with
an affine fractional extension in \citet{comtecoutinrenault2012}. The modern
rough-volatility literature emphasizes local sample-path regularity: realized
log-volatility behaves approximately like a fractional process with Hurst
parameter below one half \citep{gatheral2018}, and option-implied estimates
give complementary evidence of roughness \citep{livierimoutipallavicini2018}.
The power-law behavior of the short-maturity at-the-money skew is linked to
rough fractional volatility by \citet{fukasawa2017,fukasawa2021}.

The rough Bergomi model of \citet{bayerfrizgatheral2016} and the rough Heston
model of \citet{eleuchrosenbaum2019} are two principal model constructions.
Affine forward-variance and stochastic Volterra formulations developed by
\citet{gatheralkellerressel2019,abijaberlarssonpulido2019} explain how
non-Markovian volatility retains exponential-affine transforms; see also the
rough affine synthesis of \citet{kellerressellarssonpulido2023}. Nearly
unstable heavy-tailed Hawkes processes provide a complementary
microstructural route to fractional volatility and leverage
\citep{jaissonrosenbaum2016,eleuchfukasawarosenbaum2018}.

Let $X_t=\log(S_t/S_0)$. In the risk-neutral, zero-rate normalization used
below, the continuous rough Heston model can be written as
\begin{equation}
\label{eq:intro-rough-heston}
\begin{aligned}
 &X_t
 = -\frac12\int_0^t V_s\dd s
    +\int_0^t\sqrt{V_s}
      \left(\sqrt{1-\rho^2}\dd W_{1,s}+\rho\dd W_{2,s}\right),\\
 &V_t
 =g_0(t)+b\int_0^tK_\alpha(t-s)V_s\dd s
   +\sqrt c\int_0^tK_\alpha(t-s)\sqrt{V_s}\dd W_{2,s},\\
 &K_\alpha(t)=\frac{t^{\alpha-1}}{\Gamma(\alpha)},
 \qquad \alpha\in(1/2,1),
\end{aligned}
\end{equation}
where $W_1$ and $W_2$ are independent Brownian motions, $g_0$ is the
deterministic initial variance curve, $b$ is the Volterra drift loading,
$\sqrt c$ is the volatility-of-volatility coefficient, and $\rho$ controls
return--variance leverage. Thus the model replaces the Markovian square-root
variance dynamics by an affine stochastic Volterra equation with power-law
memory. It remains analytically tractable: its Fourier--Laplace transform is
characterized by a fractional Riccati equation
\citep{eleuchrosenbaum2019,abijaberlarssonpulido2019}.
\citet{eleuchrosenbaum2018} obtain perfect hedges for European options by
trading the stock together with forward-variance exposures. Stock-only
hedging instead yields an orthogonal volatility-risk component. Related
perfect- and partial-hedging approaches
under rough volatility are discussed by \citet{fukasawahorvathtankov2021} and
\citet{motte2024partial}.

The rough Hawkes--Heston model introduced by \citet{bondi2024} adds
state-dependent common jumps to~\eqref{eq:intro-rough-heston}. It specializes
the affine Volterra jump framework of \citet{bondi2024affine} to a
price--variance system with endogenous jump clustering. In the notation of
this paper, its reduced-form dynamics are
\begin{equation}
\label{eq:intro-rhh}
\begin{aligned}
 \frac{\dd S_t}{S_{t-}}
 &=\sqrt{V_t}\left(\sqrt{1-\rho^2}\dd W_{1,t}
                   +\rho\dd W_{2,t}\right)
   +\int_{(0,\infty)}\left(e^{-\Lambda z}-1\right)
        \mut(\dd t,\dd z),\\
 V_t
 &=g_0(t)+b\int_0^tK(t-s)V_s\dd s
   +\sqrt c\int_0^tK(t-s)\sqrt{V_s}\dd W_{2,s}\\
 &\quad+\int_{[0,t)}\int_{(0,\infty)}K(t-s)z\,
        \mut(\dd s,\dd z),\\
 \mut(\dd t,\dd z)
 &:=\mu(\dd t,\dd z)-V_t\dd t\,\nu(\dd z).
\end{aligned}
\end{equation}
Here $\mu$ is a marked point measure, its intensity is proportional to spot
variance, $z>0$ is the jump mark, and $\Lambda\geq0$ controls the associated
downward log-price jump $-\Lambda z$. The same compensated random measure
therefore produces a positive innovation in the variance driver and a
downward stock-price jump, so large volatility moves cluster endogenously.
The fractional model corresponds to $K=K_\alpha$. When $\nu=0$, the jump
terms vanish and~\eqref{eq:intro-rhh} reduces to the continuous rough Heston
system~\eqref{eq:intro-rough-heston}.

The resulting model jointly fits SPX and VIX smiles while preserving
Fourier--Laplace tractability \citep{bondi2024}. Stock trading spans the jump exposure
$z\mapsto e^{-\Lambda z}-1$, whereas a derivative generally has a different
nonlinear exposure to the mark $z$. The quadratic hedge therefore selects the
best stock projection of the claim's Brownian and marked-jump risks.

The reduced-form model also admits a microscopic foundation.
\citet{wangwuzhu2026} derive a rough Hawkes--Heston limit from a
Poisson-embedded marked-Hawkes order-flow model. This provides an order-flow
interpretation of the limiting price and variance dynamics. Our hedging
analysis starts directly from the reduced-form asset-price model.

We study the optimal stock hedge in this given market and the construction
of approximations that attain its minimum quadratic error in the limit.
The central approximation question is how a regularized hedge, evaluated
using the original market's observed history, performs when traded against
the original stock. This question links the model's conditional-transform
formulas to the stability of the hedging decision under kernel
regularization.

Quadratic hedging is a natural response to this incompleteness. Given a
square-integrable payoff $Y$ and assuming that the discounted stock-price
process $S$ is a square-integrable $\Q$-martingale, the variance-optimal
stock-only strategy minimizes the following criterion. Here $\vartheta_t$
denotes the number of shares held immediately before time $t$; its
predictability rules out reacting to an unobserved jump. Starting from initial
capital $x$, the self-financing discounted terminal wealth is
$x+\int_0^T\vartheta_t\dd S_t$, where the stochastic integral is the cumulative
gain from trading in the stock. Thus the optimization problem is
\begin{equation}
\label{eq:intro-objective}
 \inf_{x\in\R,\,\vartheta\in L^2(S)}
 \E^{\Q}\left[
   \left(Y-x-\int_0^T\vartheta_t\dd S_t\right)^2
 \right].
\end{equation}
For general claims, this mean--variance criterion and its formulation as an
$L^2$ approximation by stochastic integrals are developed by
\citet{schweizer1992,schweizer1994}; the associated optimal initial capital
and variance-optimal martingale measure are analyzed by
\citet{schweizer1996}.
In the martingale case, this quadratic projection approach goes back to
\citet{foellmersondermann1986}; \citet{bouleaulamberton1989} subsequently
derived explicit residual-risk and optimal-hedge formulas in Markovian
markets. Writing
$M_t^Y:=\E^\Q[Y|\F_t]$, the classical Galtchouk--Kunita--Watanabe (GKW)
decomposition \citep{galtchouk1976,kunitawatanabe1967} gives the orthogonal
projection of the centered claim martingale
$M^Y-M_0^Y$ onto the closed stable space generated by $S$. Its
stochastic-integral component is the optimal strategy, while
$x^*=M_0^Y=\E^\Q[Y]$. Quadratic hedging
and its relation to local risk minimization and mean-variance hedging are
surveyed by \citet{schweizer2001}; \citet{heathplatenschweizer2001} compare
the two approaches in stochastic-volatility models, and general semimartingale
results are developed in \citet{cernykallsen2007}. In the present risk-neutral setting, the GKW
projection is especially transparent because the discounted stock has zero
drift. Under a historical measure, an additional opportunity process and a
variance-optimal martingale measure would be required. Throughout this paper,
variance optimality is therefore understood relative to the measure $\Q$ in~\eqref{eq:intro-objective}.

Our approach builds on the transform-based GKW methodology developed for
affine stochastic-volatility models by \citet{kallsenpauwels2010} and extended
to more general factor models and semi-static hedging by
\citet{ditellahauboldkellerressel2019}. In the rough Hawkes--Heston model,
the conditional transform of \citet{bondi2024} supplies the Brownian and
common marked-jump integrands needed for this projection. For each complex
exponent $w$, define
\[
 H_t(w,T):=\E^{\Q}[e^{wX_T}|\F_t].
\]
The process $H(w,T)$ is an exponential-affine martingale whose Brownian and
jump integrands are explicit functions of the Riccati--Volterra solution
$\psi_w$. Projecting these integrands on those of $S$ gives a closed formula
for the optimal stock position. The calculation is one-dimensional even
though the volatility state is infinite-dimensional. European claims are then
handled by Fourier inversion. In particular, the Lewis representation used by
\citet{bondi2024}, following \citet{lewis2001}, places the contour at
$\operatorname{Re}w=1/2$, exactly where
$e^{wX_T}$ is square-integrable whenever $S$ is a martingale.

There are two especially close model-specific strands of literature. First,
\citet{martins2023thesis} studies mean-variance hedging in the continuous rough
Heston model under the historical measure. Because the stock then has a drift,
the solution requires an opportunity process, a variance-optimal martingale
measure, and a feedback equation. Our martingale formulation under $\Q$
leads directly to a GKW projection. Second,
\citet{callegaro2025semistatic} develops Fourier and GKW methods for
semi-static variance-optimal hedging in a finite-dimensional affine model
with self-exciting jumps, with a variance swap hedged by dynamic stock
trading and static European options. Its dynamic inner problem provides a
close methodological precedent for our stock projection. Their specification
uses a continuous Heston variance and a separate Hawkes intensity. In the
rough Hawkes--Heston setting considered here, the conditional transform depends
on a Volterra history, and common marked jumps enter that history through
a potentially singular kernel. The approximation analysis therefore
requires a predictable reconstruction of this history and control of the
resulting trading error against the original stock.

Approximation results provide a separate point of comparison. Markovian
representations and finite-dimensional approximations of Volterra Heston and
fractional-kernel equations are developed by
\citet{abijabereleuch2019markovian,abijabereleuch2019,
abijaber2019lifting,bayerbreneis2023}; efficient weak
simulation of rough Heston is studied by \citet{bayerbreneis2024simulation}.
More specifically, \citet{bayerbreneis2023weak} bound weak pricing errors for
continuous rough Heston in terms of the $L^1$ kernel error, while
\citet{dinunnoyurchenko2026} derive approximation error estimates for
mean-square hedging in the continuous sandwiched Volterra volatility model.
Our stability result addresses the state-dependent common marked jumps of the
rough Hawkes--Heston model while keeping its stock, variance driver, initial
curve, and information flow fixed. The regularized kernel is used in the
Riccati equation and the reconstruction of the hedge's memory. Each
approximate strategy is thus evaluated in the same trading space $L^2(S)$,
with terminal error measured against the same payoff and stock.

Three features of this construction determine the stability proof. The
singularity of the kernel at a jump time is handled by histories integrated
strictly before the trading time and an explicit predictable version of the
memory process. A modulus truncation transfers the exact conditional
transform's random envelope to its approximation. Finally, truncating the
integrable weight $S_{t-}V_t$ converts the memory estimates into convergence
in the stock's trading norm. These steps connect deterministic kernel
regularization to the admissibility and performance of the resulting
stochastic trading strategy.

The paper makes two contributions.
\begin{enumerate}[label=(\roman*)]
\item \textit{Stock hedges and residual risk.}
      We specialize the affine-transform GKW construction to the rough
      Hawkes--Heston model, deriving the Brownian and common marked-jump
      covariance terms and the orthogonal residual for exponential claims.
      Fourier synthesis gives semi-explicit call hedges and a spectral
      representation of the minimum quadratic error; see
      Section~\ref{sec:main-results}.

\item \textit{Convergence of predictable approximations in the original market.}
      We construct admissible kernel-regularized holdings from the original
      history. Lemma~\ref{lem:original-deterministic-stability} controls the
      Riccati and memory terms, and
      Theorem~\ref{thm:original-compact-hedge} proves convergence in $L^2(S)$
      on compact Fourier intervals. For calls,
      Corollary~\ref{cor:original-call-stability} constructs a joint sequence
      of kernels and Fourier cutoffs along which initial capitals and
      strategies converge, continuous-time gains converge uniformly in time
      in square mean, and terminal mean-square errors converge to the
      original market's variance-optimal value.
\end{enumerate}

The numerical experiment illustrates the second contribution on common
original-market paths using shifted fractional kernels. It follows the
proof through kernel, Riccati, memory, strategy, and gain diagnostics and
measures the combined squared capital and strategy distance to a numerical
benchmark.

The rest of the paper is organized as follows. Section~\ref{sec:model} states
the rough Hawkes--Heston model and the quadratic hedging problem.
Section~\ref{sec:main-results} gives the exponential-claim projection, the
Fourier strategy, and the hedging-error formula.
Section~\ref{sec:original-market-stability} constructs approximate hedges
and proves their convergence in the original market.
Section~\ref{sec:original-market-numerics} gives a fixed-market numerical
experiment aligned with that construction.
Section~\ref{sec:conclusion} concludes.

\section{Model and quadratic hedging problem}
\label{sec:model}

\subsection{The rough Hawkes--Heston model}

We use the risk-neutral specification of \citet{bondi2024} with zero interest
rate and dividends. Let $(\Omega,\F,\mathbb F,\Q)$ satisfy the usual
conditions, let $\F_0$ be $\Q$-trivial, and suppose that the space supports two
independent Brownian motions $W_1,W_2$ and an
integer-valued random measure $\mu(\dd t,\dd z)$ on
$\R_+\times(0,\infty)$. The following assumption reproduces the kernel and
initial-curve conditions in \citet[Hypotheses~2.1--2.2]{bondi2024}, together
with the standing jump and variance-process conditions in their model setup.

\begin{assumption}[Model admissibility]
\label{ass:model}
The kernel $K\in L^2_{\mathrm{loc}}(\R_+)$ is nonnegative, nonincreasing,
not identically zero, and continuously differentiable on $(0,\infty)$. It has
a resolvent of the first kind $L$ which is nonnegative and nonincreasing in
the sense that $s\mapsto L([s,s+h])$ is nonincreasing for every $h\geq0$.
The deterministic curve $g_0$ is continuous and nondecreasing with
$g_0(0)\geq0$, and $\nu$ is a nonnegative Borel measure on $(0,\infty)$ such
that
\[
 \int_{(0,\infty)}z^2\nu(\dd z)<\infty.
\]
Finally, a nonnegative predictable process $V$ with paths in
$L^2_{\mathrm{loc}}(\R_+)$ is given which solves the stochastic Volterra
equation below in the weak sense.
\end{assumption}

Write $V_t=\sigma_t^2$ for spot variance. The predictable compensator of
$\mu$ is
\begin{equation}
\label{eq:compensator}
 \nu^\mu(\dd t,\dd z)=V_t\dd t\,\nu(\dd z),
 \qquad
 \mut(\dd t,\dd z)=\mu(\dd t,\dd z)-V_t\dd t\,\nu(\dd z).
\end{equation}
The variance process solves the affine Volterra equation
\begin{align}
 V_t&=g_0(t)+(K*\dd Z)_t,
 \label{eq:variance-volterra}\\
 \dd Z_t&=bV_t\dd t+\sqrt{c}\sqrt{V_t}\dd W_{2,t}
       +\int_{(0,\infty)}z\mut(\dd t,\dd z).
 \label{eq:variance-driver}
\end{align}
Here $b\in\R$ and $c>0$. Equation~\eqref{eq:variance-volterra} is understood
$\Q\otimes\dd t$-almost everywhere. We use the predictable stochastic
convolution convention
\[
 (K*\dd Z)_t:=\int_{[0,t)}K(t-s)\dd Z_s;
\]
this convention is important for a singular kernel at a jump time. Changing
the value at the right endpoint does not alter the
$\Q\otimes\dd t$-equivalence class.

\begin{assumption}[Conditional-transform regularity]
\label{ass:transform}
The first-kind resolvent has the decomposition
\[
 L(\dd t)=\ell(t)\dd t+\ell_0\delta_0(\dd t),
 \qquad \ell\in L^1_{\mathrm{loc}}(\R_+),\quad \ell_0\geq0.
\]
Moreover, writing $\Delta_\varepsilon K(t)=K(t+\varepsilon)$,
\[
 \sup_{\varepsilon\in(0,T]}
 \|\Delta_\varepsilon K*L\|_{\mathrm{TV}([0,T])}<\infty
 \qquad\text{for every }T>0.
\]
\end{assumption}

The kernel clause of Assumption~\ref{ass:model} is precisely
\citet[Hypothesis~2.1]{bondi2024}. Its monotonicity, differentiability, and
resolvent requirements are also used in the existence and affine-transform
theory of \citet{abijaberlarssonpulido2019,bondi2024affine}. Its
$g_0$ clause is \citet[Hypothesis~2.2]{bondi2024}. Writing $\nu$ on
$(0,\infty)$ is equivalent to their convention of taking a measure on
$\R_+$ with $\nu(\{0\})=0$. Assumption~\ref{ass:transform} consists exactly
of the additional kernel conditions imposed in
\citet[Theorem~3.3]{bondi2024}. It supplies the regularity used by the
conditional transform, while Assumption~\ref{ass:model} contains the kernel
conditions for the model and the Riccati--Volterra equation.

We assume weak existence directly in Assumption~\ref{ass:model}. If existence
is instead to be deduced from primitive conditions, then
\citet[Remark~2.3]{bondi2024} additionally requires $K(\cdot+1/n)$ to satisfy
their Hypothesis~2.1 for every $n\in\mathbb N$; weak existence then follows
from \citet[Theorem~2.13]{abijaber2021weak} and
\citet[Lemma~9]{bondi2024affine}. Once existence is given, weak uniqueness
under the kernel conditions in Assumption~\ref{ass:model} follows from
\citet[Corollary~12]{bondi2024affine}. The results below therefore start from
the weak solution specified in Assumption~\ref{ass:model}.

The moment estimate of \citet[Lemma~1]{bondi2024affine}, applied to the
scalar affine characteristics $b(v)=bv$, $a(v)=cv$, and
$\eta(v,\dd z)=v\nu(\dd z)$, also gives
\begin{equation}
\label{eq:original-integrated-first-moment}
 \E^\Q\!\left[\int_0^T V_t\dd t\right]<\infty,
 \qquad T<\infty.
\end{equation}
Its hypotheses follow from the continuous initial curve, the locally
square-integrable kernel, and the jump second moment in
Assumption~\ref{ass:model}. This bound will control the approximation of
the hedge using the original variance history.

The fractional specification is
\begin{equation}
\label{eq:fractional-kernel}
 K(t)=K_\alpha(t):=\frac{t^{\alpha-1}}{\Gamma(\alpha)},
 \qquad \alpha\in(1/2,1].
\end{equation}
Here $\alpha>1/2$ is exactly the condition $K_\alpha\in
L^2_{\mathrm{loc}}(\R_+)$. For $\alpha\in(1/2,1)$, the first-kind resolvent is
\[
 L_\alpha(\dd t)=\frac{t^{-\alpha}}{\Gamma(1-\alpha)}\dd t,
\]
whereas $L_1=\delta_0$. Hence the fractional specification satisfies all the
kernel clauses of Assumptions~\ref{ass:model} and~\ref{ass:transform}: for
$\alpha<1$, $\Delta_\varepsilon K_\alpha*L_\alpha$ is nondecreasing and
bounded above by $K_\alpha*L_\alpha=1$, and the case $\alpha=1$ is immediate.
The boundary case $\alpha=1$ has $K_1\equiv1$ and is Markovian. When
$g_0(t)=V_0+\beta t$, the corresponding c\`adl\`ag state modification
satisfies a jump-CIR equation. This is a useful benchmark, although it is not
identical to the Heston--Hawkes model of \citet{banos2024}, which has a
separate exponential Hawkes intensity state.

Let $X_t=\log(S_t/S_0)$. The stock dynamics can be written directly in
martingale form as
\begin{equation}
\label{eq:stock-martingale}
 \frac{\dd S_t}{S_{t-}}
 =\sqrt{V_t}\left(\sqrt{1-\rho^2}\dd W_{1,t}+\rho\dd W_{2,t}\right)
  +\int_{(0,\infty)}q(z)\mut(\dd t,\dd z),
 \qquad q(z):=e^{-\Lambda z}-1,
\end{equation}
where $\rho\in[-1,1]$ and $\Lambda\ge0$. Thus a mark $z$ creates the common
log-price jump $\Delta X_t=-\Lambda z$ and a positive innovation in the
variance driver. Since $q(z)>-1$, the Dol\'eans exponential representation
of~\eqref{eq:stock-martingale} also shows that $S_t>0$ on finite horizons.
When $\nu=0$, the jump terms vanish and the model reduces to continuous rough
Heston.
Equivalently,
\begin{align}
 \dd X_t
 &= -\left[\frac12+\int_{(0,\infty)}
       \left(e^{-\Lambda z}-1+\Lambda z\right)\nu(\dd z)\right]V_t\dd t
 \nonumber\\*
 &\quad+\sqrt{V_t}\left(\sqrt{1-\rho^2}\dd W_{1,t}
                   +\rho\dd W_{2,t}\right)
       -\Lambda\int_{(0,\infty)}z\mut(\dd t,\dd z).
 \label{eq:log-price}
\end{align}

For $u,v\in\C$ in their admissible domain, define the affine characteristic
\begin{align}
\Rr(u,v)
&:=\frac12(u^2-u)+(b+\rho\sqrt c\,u)v+\frac c2v^2
\nonumber\\
&\quad+\int_{(0,\infty)}
 \left(e^{(v-\Lambda u)z}-u(e^{-\Lambda z}-1)-1-vz\right)\nu(\dd z).
\label{eq:R-map}
\end{align}
For $w\in\C$ with $\operatorname{Re}w\in[0,1]$, the associated
Riccati--Volterra equation is
\begin{equation}
\label{eq:riccati-volterra}
 \psi_w(t)=\int_0^tK(t-s)\Rr(w,\psi_w(s))\dd s.
\end{equation}
By \citet[Theorem~3.1]{bondi2024}, the kernel conditions in
Assumption~\ref{ass:model} ensure that this equation has a unique continuous
solution with $\operatorname{Re}\psi_w\leq0$. Define the adjusted forward process
\begin{equation}
\label{eq:adjusted-forward}
 g_t(s)=g_0(s)+\int_{[0,t]}K(s-r)\dd Z_r,
 \qquad s>t.
\end{equation}
Under Assumptions~\ref{ass:model} and~\ref{ass:transform},
\citet[Theorem~3.3]{bondi2024} gives the following c\`adl\`ag version of the
conditional Fourier--Laplace transform, up to indistinguishability:
\begin{equation}
\label{eq:H-affine}
H_t(w,T):=\E^\Q[e^{wX_T}|\F_t]
=\exp\left(
  wX_t+\int_t^T\Rr(w,\psi_w(T-s))g_t(s)\dd s
\right).
\end{equation}

\subsection{The Hilbert-space hedging problem}

Let $S$ be the only dynamically traded risky asset. On the predictable
$\sigma$-field of $\Omega\times[0,T]$, define the measure
\[
 \mu_S(A):=\E^\Q\!\left[\int_0^T
   \mathbf 1_A(\omega,t)\dd\langle S\rangle_t\right].
\]
Denote by $L^2(S)$ the space of real-valued predictable strategies, identified up to
$\mu_S$-null sets, satisfying
\[
 \E^\Q\left[\int_0^T\vartheta_t^2\dd\langle S\rangle_t\right]<\infty.
\]
For complex claims we use the complexifications $L^2(\Q;\C)$ and
$L^2(S;\C)$, with
\[
 \|\vartheta\|_{L^2(S;\C)}^2
 :=\E^\Q\left[\int_0^T|\vartheta_t|^2\dd\langle S\rangle_t\right],
 \qquad
 \langle\vartheta,\eta\rangle_{L^2(S;\C)}
 :=\E^\Q\left[\int_0^T\vartheta_t\overline{\eta_t}\dd\langle S\rangle_t\right].
\]
We suppress $\C$ in the space notation when the context is clear. Complex
GKW decompositions are obtained by applying the real decomposition to the
real and imaginary parts separately.

\begin{assumption}[Square-integrability]
\label{ass:l2}
The stock $S$ is a square-integrable $\Q$-martingale on $[0,T]$,
and every target payoff $Y$ under consideration belongs to $L^2(\Q)$.
\end{assumption}

The $L^2$ projection requires the stock second moment in addition to the
true-martingale property proved by
\citet[Corollary~3.4]{bondi2024}. This moment is governed by a
Riccati--Volterra equation and may explode even in continuous rough Heston
\citep{gerhold2019moment}. The following proposition gives a directly
verifiable criterion in terms of the model coefficients.
Assumption~\ref{ass:model} already implies
$\int_{(0,\infty)}q(z)^2\nu(\dd z)<\infty$.

\begin{proposition}[A model-based second-moment criterion]
\label{prop:l2-sufficient}
Suppose that Assumptions~\ref{ass:model} and~\ref{ass:transform} hold, and set
\begin{align}
 d_2&:=1+\int_{(0,\infty)}q(z)^2\nu(\dd z),
 \nonumber\\
 \widehat b_2&:=b+2\rho\sqrt c
   +\int_{(0,\infty)}z\left(e^{-2\Lambda z}-1\right)\nu(\dd z),
 \label{eq:d2-b2}\\
 F_2(v)&:=d_2+\widehat b_2v+\frac c2v^2
   +\int_{(0,\infty)}e^{-2\Lambda z}
      \left(e^{vz}-1-vz\right)\nu(\dd z).
 \label{eq:F2}
\end{align}
Thus $F_2(v)=\Rr(2,v)$. Consider the Riccati--Volterra equation
\begin{equation}
 \psi_2(t)=\int_0^t K(t-s)F_2\left(\psi_2(s)\right)\dd s,
 \label{eq:psi2}
\end{equation}
and suppose that there exists $\bar v>0$ such that
\begin{equation}
 F_2(\bar v)\le0,
 \qquad
 \text{and either }\nu=0\text{ or }\bar v<2\Lambda.
 \label{eq:F2-barrier}
\end{equation}
Then~\eqref{eq:psi2} has a unique global continuous solution satisfying
\begin{equation}
 0\le\psi_2(t)\le v_*\le\bar v,
 \qquad t\ge0,
 \label{eq:psi2-bounds}
\end{equation}
where $v_*$ is the first positive zero of $F_2$. Consequently, $S$ is a
square-integrable $\Q$-martingale on every finite horizon and, for every
$T>0$,
\begin{equation}
 \E[S_T^2]
 =S_0^2\exp\left(
   \int_0^T F_2\left(\psi_2(T-s)\right)g_0(s)\dd s
 \right)<\infty.
 \label{eq:stock-second-moment}
\end{equation}
\end{proposition}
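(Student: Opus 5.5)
Three steps: verify $F_2=\Rr(2,\cdot)$ and the structure of $F_2$ on $[0,\bar v]$; build a global solution of \eqref{eq:psi2} trapped in $[0,v_*]$ by a comparison argument for Volterra equations with nonincreasing kernels; then identify $\E[S_T^2]$ with the transform at the real exponent $w=2$.

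\textbf{Step 1: the identity and the barrier.} Substituting $u=2$ into \eqref{eq:R-map} gives
\[
\tfrac12(4-2)=1,\qquad (b+2\rho\sqrt c)v,\qquad \tfrac c2v^2 .
\]
For the jump integrand, $e^{-2\Lambda z}-2(e^{-\Lambda z}-1)-1=q(z)^2$. Writing $e^{(v-2\Lambda)z}-vz=e^{-2\Lambda z}(e^{vz}-1-vz)+e^{-2\Lambda z}-1+vz(e^{-2\Lambda z}-1)+\dots$ and collecting terms yields exactly $d_2+\widehat b_2 v+\frac c2 v^2+\int e^{-2\Lambda z}(e^{vz}-1-vz)\nu(\dd z)$. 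Integrability for $v\in[0,\bar v]$ follows from $\bar v<2\Lambda$ together with $\int z^2\nu<\infty$; if $\nu=0$ there is nothing to check. Moreover $F_2$ is convex and locally Lipschitz on $[0,\bar v]$, with $F_2(0)=d_2\ge1>0$ and $F_2(\bar v)\le0$, so a first zero $v_*\in(0,\bar v]$ exists and $F_2>0$ on $[0,v_*)$.

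\textbf{Step 2: existence and uniqueness in $[0,v_*]$.} I would truncate the nonlinearity to $\tilde F(v):=F_2(\min(\max(v,0),v_*))$, which is bounded and globally Lipschitz. Standard Volterra theory with $K\in L^2_{\mathrm{loc}}\subset L^1_{\mathrm{loc}}$ then gives a unique global continuous solution $\tilde\psi$.

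Since $\tilde F\ge0$ and $K\ge0$, we get $\tilde\psi\ge0$. The upper bound $\tilde\psi\le v_*$ is the \textbf{main obstacle}, because Volterra equations lack a pathwise maximum principle. I would try first the comparison result for Riccati--Volterra equations with nonincreasing kernels admitting a nonnegative, nonincreasing first-kind resolvent, as used in \citet[Theorem~3.1]{bondi2024} and \citet{abijaberlarssonpulido2019}. The constant $v_*$ is a supersolution in the sense that $v_*-\int_0^tK(t-s)\tilde F(v_*)\,\dd s=v_*\ge0$ is an admissible input: a constant is nondecreasing and nonnegative, and $\Delta_h$ of it minus itself convolved with $L$ stays nonnegative. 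Alternatively, I would use the resolvent reformulation: setting $\chi=v_*-\tilde\psi$, the Volterra equation becomes one with nonnegative forcing, and the invariance lemma for such equations \citep[Theorem~C.1-type]{abijaberlarssonpulido2019} keeps $\chi\ge0$.

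Once the bound holds, truncation is inactive, so $\tilde\psi$ solves \eqref{eq:psi2}. Uniqueness among continuous solutions follows from local Lipschitz continuity of $F_2$ near the range together with a Gronwall argument for Volterra equations. This gives \eqref{eq:psi2-bounds}.

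\textbf{Step 3: the moment formula.} I would apply the exponential-affine transform at the real exponent $w=2$. Theorem 3.3 of \citet{bondi2024} covers only $\operatorname{Re}w\in[0,1]$, so I would argue directly with
\[
M_t=\exp\Bigl(2X_t+\int_t^T F_2(\psi_2(T-s))\,g_t(s)\,\dd s\Bigr).
\]
Itô's formula together with the Riccati identity shows that $M$ is a local martingale. Since $\psi_2$ is bounded and $F_2(\psi_2)\ge0$, $M$ is positive, hence a supermartingale. This gives $\E[S_T^2]\le S_0^2\exp(\int_0^T F_2(\psi_2(T-s))g_0(s)\,\dd s)<\infty$.

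To upgrade to equality, I would either use a Novikov/Lépingle--Mémin-type criterion for the jump exponential with bounded $\psi_2\le v_*<2\Lambda$, or argue by analytic continuation in $w\in[1,2]$ using monotone convergence. Either route shows that $M$ is a true martingale, which gives \eqref{eq:stock-second-moment}. Finally, square-integrability combined with the true-martingale property of \citet[Corollary~3.4]{bondi2024} shows that $S$ is a square-integrable martingale on every finite horizon.
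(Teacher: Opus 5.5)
Your architecture is the same as the paper's. You check $F_2=\Rr(2,\cdot)$, trap the solution of \eqref{eq:psi2} in $[0,v_*]$ using the linear-Volterra positivity theorem with the admissible constant forcing $v_*$, and then study the exponential-affine process at $w=2$. Two of your deviations are genuine simplifications. Clipping makes $\tilde F\geq0$, so the lower bound is immediate. The supermartingale inequality already gives $\E^\Q[S_T^2]\leq S_0^2\exp(\int_0^TF_2(\psi_2(T-s))g_0(s)\,\dd s)<\infty$, and hence square-integrability, without any martingale criterion.

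\textbf{The gap is the equality in \eqref{eq:stock-second-moment}.} A Novikov or L\'epingle--M\'emin criterion does not apply as stated. Every loading of the stochastic logarithm of $M$ carries the factor $\sqrt{V_t}$, so these criteria need $\E^\Q\exp(C\int_0^TV_t\,\dd t)<\infty$ for an explicit $C$. Neither the assumptions nor your argument supply this; it would require yet another non-exploding Riccati--Volterra equation. The analytic-continuation route is also not carried out. It would need Riccati--Volterra solutions for complex $w$ in a neighbourhood of $[1,2]$, analytic in $w$, with real parts small enough for the jump integrals to converge, and you do not construct them.

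The paper proceeds differently. It writes $e^{U}$ as a stochastic exponential whose Brownian loadings $2\sqrt{1-\rho^2}$ and $2\rho+\sqrt c\,\psi_2(T-t)$ (against $\sqrt{V_t}\,\dd W$) are bounded and deterministic. Its jump coefficient $e^{(\psi_2(T-t)-2\Lambda)z}-1$ is nonpositive because $\psi_2\leq v_*\leq\bar v<2\Lambda$. These are the hypotheses of the stochastic-exponential criterion in \citet[Lemma~3.2]{bondi2024}, which gives the true-martingale property. Your remark that $\psi_2\leq v_*<2\Lambda$ identifies the right condition, but it has to feed into a criterion adapted to this Volterra structure, not a Novikov-type one.

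\textbf{A smaller point in Step 2.} The positivity theorem applies to $\chi=v_*-\tilde\psi$ only after you linearise: write $-\tilde F(\tilde\psi)=a_1(t)\chi$, so that $\chi=v_*+K*(a_1\chi)$ is a linear equation. Nonnegative forcing alone is not the mechanism, since $-K*\tilde F(\tilde\psi)\leq0$. With your clipping at $v_*$, $a_1$ must vanish where $\tilde\psi>v_*$ but tends to $F_2'(v_*)$ from below. It is therefore discontinuous at crossing times unless $F_2'(v_*)=0$.

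The paper avoids this by working with the untruncated local solution, whose difference quotients extend continuously, and continuing it once the bounds are known. Alternatively, you can extend $F_2$ linearly above $v_*$ with slope $F_2'(v_*)$. Then $a_1$ is continuous and bounded, the positivity theorem gives $\tilde\psi\leq v_*$ first, and your clipping-at-zero argument then gives $\tilde\psi\geq0$.
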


\begin{proof}
The identity $F_2(v)=\Rr(2,v)$ follows by collecting the constant and linear
terms in~\eqref{eq:R-map}. Since $F_2(0)=d_2>0$, continuity
and~\eqref{eq:F2-barrier} give a first positive zero $v_*\le\bar v$. Moreover,
$F_2$ is locally Lipschitz on a neighborhood of $[0,\bar v]$: when
$\nu\ne0$, this follows from $\bar v<2\Lambda$ and
$\int_{(0,\infty)}z^2\nu(\dd z)<\infty$, while for $\nu=0$ the function is quadratic.
The local-existence theorem of
\citet[Chapter~12, Theorem~1.1]{gripenberg1990} therefore gives a
noncontinuable continuous solution of~\eqref{eq:psi2}.

To see that this solution remains in $[0,v_*]$, write, along the solution,
\[
 F_2(\psi_2)=d_2+a_0\psi_2,
 \qquad
 -F_2(\psi_2)=a_1(v_*-\psi_2),
\]
where the difference quotients $a_0$ and $a_1$ are extended continuously at
$0$ and $v_*$, respectively. On every compact time interval before the
maximal existence time these coefficients are bounded. The positivity result
for linear Volterra equations in
\citet[Theorem~C.1 and Remark~B.6]{abijabereleuch2019}, first applied to
$\psi_2=K*(a_0\psi_2+d_2)$ and then to
$v_*-\psi_2=v_*+K*(a_1(v_*-\psi_2))$, yields
$0\le\psi_2\le v_*$. Here the constant forcing curve $g\equiv v_*$ is
admissible under the kernel conditions in Assumption~\ref{ass:model}. Since
the solution remains in a compact subset of the domain of $F_2$, the
continuation argument makes it global; local Lipschitz continuity also gives
uniqueness. This is the same local-existence, comparison, and continuation
method used in the proof of \citet[Theorem~3.1]{bondi2024}.

Now fix $T>0$ and define the real semimartingale
\begin{align*}
 U_0&:=\int_0^T F_2\left(\psi_2(T-s)\right)g_0(s)\dd s,\\
 U_t&:=U_0+2X_t+\int_0^t\psi_2(T-s)\dd Z_s
       -\int_0^tF_2\left(\psi_2(T-s)\right)V_s\dd s.
\end{align*}
 After localization, an application of the stochastic Fubini theorem (see, e.g.,
 \citet[Theorem~IV.65]{protter2005}) to~\eqref{eq:variance-volterra},
 together with the Riccati equation~\eqref{eq:psi2}, gives
\[
 \int_0^T F_2\left(\psi_2(T-s)\right)(V_s-g_0(s))\dd s
 =\int_0^T\psi_2(T-s)\dd Z_s.
\]
Hence $U_T=2X_T$. It\^o's formula shows that $\widehat H_t:=e^{U_t}$
has stochastic logarithm
\begin{align*}
 \frac{\dd\widehat H_t}{\widehat H_{t-}}
 ={}&2\sqrt{1-\rho^2}\sqrt{V_t}\,\dd W_t^1
   +\left(2\rho+\sqrt c\,\psi_2(T-t)\right)
      \sqrt{V_t}\,\dd W_t^2 \\*
 &+\int_{(0,\infty)}
   \left(e^{(\psi_2(T-t)-2\Lambda)z}-1\right)
   \widetilde\mu(\dd t,\dd z).
\end{align*}
The deterministic coefficients are bounded, and the jump coefficient is
nonpositive when $\nu\ne0$ by~\eqref{eq:psi2-bounds}
and~\eqref{eq:F2-barrier}. Therefore the stochastic-exponential criterion
of \citet[Lemma~3.2]{bondi2024} makes $\widehat H$ a true martingale (with the
jump term absent when $\nu=0$). Since $\widehat H_T=e^{2X_T}$, evaluating its
expectation at time zero yields~\eqref{eq:stock-second-moment}.
\end{proof}

\begin{remark}[Verification for the calibrated specification]
\label{rem:calibrated-l2}
For the calibration reported by \citet{bondi2024},
\[
 \nu(\dd z)=e^{-z}\dd z,
 \qquad
 (b,\rho,c,\Lambda)=(-1.812,-0.731,0.115,0.276).
\]
Writing $a=1+2\Lambda$, equations~\eqref{eq:d2-b2}--\eqref{eq:F2}
reduce to
\begin{align*}
 d_2&=2+\frac{1}{1+2\Lambda}-\frac{2}{1+\Lambda},\\
 \widehat b_2&=b+2\rho\sqrt c+\frac{1}{(1+2\Lambda)^2}-1,\\
 F_2(v)&=d_2+\widehat b_2v+\frac c2v^2
   +\frac{1}{a-v}-\frac{1}{a}-\frac{v}{a^2},
 \qquad v<a.
\end{align*}
Numerically,
\[
 d_2\approx1.0769,
 \qquad
 \widehat b_2\approx-2.8926,
 \qquad
 F_2(0.4)\approx-0.0133<0,
 \qquad
 0.4<2\Lambda=0.552.
\]
Thus condition~\eqref{eq:F2-barrier} holds with $\bar v=0.4$, and
Proposition~\ref{prop:l2-sufficient} verifies that the calibrated stock is a
square-integrable $\Q$-martingale on every finite horizon.
\end{remark}

Proposition~\ref{prop:l2-sufficient} gives a coefficient-level criterion for
Assumption~\ref{ass:l2}: the barrier controls the Riccati--Volterra solution
globally and thereby verifies the required stock second moment.

For $Y\in L^2(\Q)$, let $M_t^Y=\E^\Q[Y|\F_t]$. The GKW decomposition
\citep{galtchouk1976,kunitawatanabe1967} is
\begin{equation}
\label{eq:gkw-general}
 M_t^Y=M_0^Y+\int_0^t\vartheta_s^Y\dd S_s+L_t^Y,
 \qquad \langle L^Y,S\rangle=0,
\end{equation}
where $L^Y$ is a square-integrable martingale strongly orthogonal to $S$. It
solves~\eqref{eq:intro-objective} with
\begin{equation}
\label{eq:optimal-capital-error}
 x^*=\E^\Q[Y],\qquad
 \vartheta^*=\vartheta^Y,
 \qquad
 \varepsilon_Y^2
 :=\E^\Q\left[
   \left(Y-x^*-\int_0^T\vartheta_t^*\dd S_t\right)^2
 \right]
 =\E^\Q[(L_T^Y)^2].
\end{equation}
Indeed, constants are orthogonal in $L^2(\Q)$ to terminal stochastic
integrals against the martingale $S$, and the GKW decomposition is precisely
the orthogonal decomposition of $Y-\E^\Q[Y]$ into the closed subspace of such
integrals and its orthogonal complement. Equation~\eqref{eq:optimal-capital-error}
then follows from the Pythagorean identity.

\paragraph{Hedging framework.}
All optimality statements refer to the continuous-time gain space
\[
 \mathcal G^2(S)=
 \left\{\int_0^T\vartheta_t\dd S_t:\vartheta\in L^2(S)\right\}
\]
and the $L^2(\Q)$ criterion in~\eqref{eq:intro-objective}. Thus the results characterize
continuous trading under the risk-neutral measure; discrete-time and
historical-measure hedging are separate optimization problems.

\section{Stock hedges and residual risk}
\label{sec:main-results}

We apply the affine conditional transform of \citet{bondi2024} within the
transform-based GKW framework to obtain the stock hedge and its orthogonal
residual. The explicit covariance terms account for both Brownian shocks
and common marked jumps. These formulas also specify the target strategies
for the approximation analysis in
Section~\ref{sec:original-market-stability}.

\subsection{Projection of the affine Fourier martingales}

For complex local martingales we use the bilinear extension of predictable
covariation. Thus $\langle M,\overline N\rangle$ is the Hermitian bracket and
$\E[\langle M,\overline M\rangle_T]\geq0$ whenever $M_0=0$ and $M_T$ is
square-integrable.

Fix $T>0$ and $w\in\C$ with $\operatorname{Re}w\in[0,1]$ such that
$e^{wX_T}\in L^2(\Q)$. Then $H(w,T)$ in~\eqref{eq:H-affine} is a
square-integrable complex martingale. To simplify notation, set
\begin{align}
 \psi_t^w&:=\psi_w(T-t),
 \label{eq:psi-short}\\
 r_t^w(z)&:=\exp\left((\psi_t^w-\Lambda w)z\right)-1,
 \label{eq:r-short}\\
 D&:=1+\int_{(0,\infty)}q(z)^2\nu(\dd z),
 \label{eq:D-def}\\
 B_t^w&:=w+\rho\sqrt c\,\psi_t^w
       +\int_{(0,\infty)}r_t^w(z)q(z)\nu(\dd z),
 \label{eq:B-def}\\
 \beta_t^w&:=\frac{B_t^w}{D}.
 \label{eq:beta-def}
\end{align}
Because $\operatorname{Re}\psi_t^w\leq0$ and
$\operatorname{Re}w\geq0$, both $q$ and $r_t^w$ belong to $L^2(\nu)$:
near zero their absolute values are bounded by a constant times $z$, while
away from zero they are bounded and $\nu([1,\infty))<\infty$. Thus every
quantity in~\eqref{eq:D-def}--\eqref{eq:beta-def} is finite, with $D\geq1$.
The first two terms in $B_t^w$ are the instantaneous Brownian covariance
between $H(w,T)$ and the stock. The integral is the covariance generated by
their common marked jumps. The denominator $D$ is the normalized predictable
quadratic variation of the stock.

\begin{theorem}[Variance-optimal hedge for an exponential claim]
\label{thm:exponential-gkw}
Suppose Assumptions~\ref{ass:model}, \ref{ass:transform}, and~\ref{ass:l2}
hold and let $w$ satisfy the conditions above. Then the complex GKW
decomposition of $e^{wX_T}$ with respect to $S$ is obtained from the
conditional-transform martingale recalled from~\eqref{eq:H-affine}:
\[
 H_t(w,T)=\E^\Q[e^{wX_T}|\F_t]
 =\exp\left(
   wX_t+\int_t^T\Rr(w,\psi_w(T-s))g_t(s)\dd s
 \right).
\]
\begin{samepage}
Specifically,
\begin{equation}
\label{eq:gkw-exponential}
 H_t(w,T)=H_0(w,T)+\int_0^t\vartheta_s^w\dd S_s+L_t^w,
\end{equation}
where
\begin{equation}
 \vartheta_t^w=
 \frac{H_{t-}(w,T)}{S_{t-}}
 \frac{
   w+\rho\sqrt c\,\psi_w(T-t)
   +\displaystyle\int_{(0,\infty)}
      \left(e^{(\psi_w(T-t)-\Lambda w)z}-1\right)
      \left(e^{-\Lambda z}-1\right)\nu(\dd z)
 }{
   1+\displaystyle\int_{(0,\infty)}
      \left(e^{-\Lambda z}-1\right)^2\nu(\dd z)
 }.
\label{eq:theta-explicit}
\end{equation}
The residual $L^w$ is a square-integrable complex martingale strongly
orthogonal to $S$.
\end{samepage}
\end{theorem}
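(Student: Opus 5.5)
The plan is to compute the martingale representation of $H(w,T)$ explicitly and then project it onto $S$. First I derive the stochastic logarithm of $H(w,T)$, mirroring the computation in the proof of Proposition~\ref{prop:l2-sufficient}. Set
\[
U_t:=\int_0^T\Rr(w,\psi_w(T-s))g_0(s)\,\dd s+wX_t+\int_0^t\psi_w(T-s)\,\dd Z_s-\int_0^t\Rr(w,\psi_w(T-s))V_s\,\dd s .
\]
Using~\eqref{eq:adjusted-forward}, the stochastic Fubini theorem after localization, and the Riccati--Volterra equation~\eqref{eq:riccati-volterra}, I will check that $U_t$ coincides with the exponent in~\eqref{eq:H-affine} for each $t$. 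The key identity is
\[
\int_t^T\Rr(w,\psi_w(T-s))\int_{[0,t]}K(s-r)\,\dd Z_r\,\dd s=\int_{[0,t]}\Bigl(\psi_w(T-r)-\int_r^t K(s-r)\Rr(w,\psi_w(T-s))\,\dd s\Bigr)\dd Z_r .
\]
Combined with $V=g_0+K*\dd Z$, it gives $\int_0^t\Rr\,V_s\,\dd s$ minus the $g_0$-part.

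Itô's formula applied to $e^{U}$, with the dynamics~\eqref{eq:log-price} and~\eqref{eq:variance-driver}, then gives
\begin{align*}
\frac{\dd H_t}{H_{t-}}={}&w\sqrt{1-\rho^2}\sqrt{V_t}\,\dd W_{1,t}+(w\rho+\sqrt c\,\psi^w_t)\sqrt{V_t}\,\dd W_{2,t}\\
&+\int_{(0,\infty)}r^w_t(z)\,\mut(\dd t,\dd z).
\end{align*}
The drift vanishes because of the definition of $\Rr$ in~\eqref{eq:R-map}. I do not need to redo this verification, since $H$ is already known to be a martingale by~\eqref{eq:H-affine}: its continuous finite-variation part must therefore vanish, and only the local-martingale integrands need to be identified. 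These integrands are read off from the jump $\Delta U_t=(\psi^w_t-\Lambda w)z$ and from the Brownian coefficients of $U$.

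Next I compute the predictable brackets using~\eqref{eq:stock-martingale} and the compensator~\eqref{eq:compensator}:
\[
\dd\langle S\rangle_t=S_{t-}^2V_tD\,\dd t,
\qquad
\dd\langle H,S\rangle_t=H_{t-}S_{t-}V_t\Bigl(w(1-\rho^2)+\rho(w\rho+\sqrt c\,\psi^w_t)+\int_{(0,\infty)} r^w_tq\,\dd\nu\Bigr)\dd t=H_{t-}S_{t-}V_tB^w_t\,\dd t .
\]
This is where the bilinear bracket convention is used. On $\{V_t>0\}$ the ratio of these densities is $\vartheta^w_t$ in~\eqref{eq:theta-explicit}. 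On $\{V_t=0\}$ both brackets vanish, so any value of $\vartheta^w$ there is immaterial modulo $\mu_S$-null sets.

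Finally I define $L^w:=H-H_0-\int\vartheta^w\,\dd S$ and must show that $\vartheta^w\in L^2(S;\C)$ and that $L^w$ is a square-integrable martingale with $\langle L^w,S\rangle=0$. This is the main technical point. The route I intend to take is to use the abstract complex GKW decomposition~\eqref{eq:gkw-general} of $e^{wX_T}\in L^2$, applied to real and imaginary parts. It supplies some $\vartheta\in L^2(S)$ and an orthogonal square-integrable $L$, and it characterizes $\vartheta$ by $\dd\langle H,S\rangle=\vartheta\,\dd\langle S\rangle$. This relation holds because $\langle M^Y,S\rangle$ exists as $M^Y=H$ and $S$ are both square-integrable, and because $\langle L,S\rangle=0$. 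Uniqueness of Radon--Nikodym densities with respect to $\dd\langle S\rangle$ then forces $\vartheta=\vartheta^w$ $\mu_S$-a.e. by the bracket computation above. Square-integrability of $\vartheta^w$ and orthogonality of $L^w$ are therefore inherited from the GKW decomposition, and no separate moment estimate on $H_{t-}/S_{t-}$ is needed. The only remaining care is the localization in the Fubini step, for which the integrability $\int z^2\,\dd\nu<\infty$, the bound~\eqref{eq:original-integrated-first-moment}, and $\operatorname{Re}\psi_w\le0$ (giving $|r^w_t(z)|\lesssim z\wedge1$) suffice.
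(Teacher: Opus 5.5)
Your proposal is correct and follows essentially the paper's argument: obtain the stochastic logarithm of $H(w,T)$, compute $\dd\langle H,S\rangle_t=H_{t-}S_{t-}V_tB_t^w\,\dd t$ and $\dd\langle S\rangle_t=S_{t-}^2V_tD\,\dd t$, identify $\vartheta^w$ as the Radon--Nikodym density, and let the abstract GKW projection supply membership in $L^2(S)$ and square-integrability of the residual. The differences are cosmetic. You re-derive the dynamics of $H$ via stochastic Fubini and It\^o's formula, using the martingale property to discard the drift, where the paper cites \citet[Theorem~3.3]{bondi2024}; and you inherit $\langle L^w,S\rangle=0$ from the GKW decomposition, whereas the paper also checks it directly from the explicit residual dynamics.
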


\begin{proof}
The martingale calculation in \citet[Theorem~3.3]{bondi2024} gives
\begin{align}
 \frac{\dd H_t(w,T)}{H_{t-}(w,T)}
 &=\sqrt{V_t}\left[
    w\sqrt{1-\rho^2}\dd W_{1,t}
   +(w\rho+\sqrt c\,\psi_t^w)\dd W_{2,t}
   \right]
\nonumber\\
&\quad+\int_{(0,\infty)}r_t^w(z)\mut(\dd t,\dd z).
\label{eq:H-dynamics-proof}
\end{align}
Combining~\eqref{eq:H-dynamics-proof} with~\eqref{eq:stock-martingale}, and
using the compensator~\eqref{eq:compensator}, yields the following predictable
covariations:
\begin{align}
 \dd\langle H(w,T),S\rangle_t
 &=H_{t-}(w,T)S_{t-}V_tB_t^w\dd t,
 \label{eq:cross-bracket-proof}\\
 \dd\langle S\rangle_t
 &=S_{t-}^2V_tD\dd t.
 \label{eq:S-bracket-proof}
\end{align}
They are well defined under Assumptions~\ref{ass:model},
\ref{ass:transform}, and~\ref{ass:l2}, and the square-integrability of
$e^{wX_T}$. The Kunita--Watanabe inequality implies
that $\langle H(w,T),S\rangle$ is absolutely continuous with respect to
$\langle S\rangle$ in the sense required for the GKW projection. Therefore,
$\mu_S$-almost everywhere, its Radon--Nikodym derivative is
\[
 \vartheta_t^w
 =\frac{\dd\langle H(w,T),S\rangle_t}
        {\dd\langle S\rangle_t}
 =\frac{H_{t-}(w,T)}{S_{t-}}\frac{B_t^w}{D},
\]
which proves~\eqref{eq:theta-explicit}. On $\{V_t=0\}$ both bracket densities
vanish, and~\eqref{eq:theta-explicit} fixes an arbitrary predictable version.
Since $H(w,T)$ and $S$ are square-integrable, the GKW projection theorem
shows that this bracket density belongs to $L^2(S)$. Consequently,
$L^w:=H(w,T)-H_0(w,T)-\int_0^\cdot\vartheta_t^w\dd S_t$ is
square-integrable. Subtracting $\vartheta_t^w\dd S_t$
from~\eqref{eq:H-dynamics-proof} gives
\begin{align}
\dd L_t^w
&=H_{t-}(w,T)\sqrt{V_t}
  \left[
   (w-\beta_t^w)\sqrt{1-\rho^2}\dd W_{1,t}
   +(w\rho+\sqrt c\,\psi_t^w-\rho\beta_t^w)\dd W_{2,t}
  \right]
\nonumber\\
&\quad+H_{t-}(w,T)\int_{(0,\infty)}
       \left(r_t^w(z)-\beta_t^wq(z)\right)\mut(\dd t,\dd z).
\label{eq:residual-dynamics}
\end{align}
Finally,
\[
 \dd\langle L^w,S\rangle_t
 =H_{t-}(w,T)S_{t-}V_t(B_t^w-\beta_t^wD)\dd t=0.
\]
Thus the bracket identity above gives strong orthogonality, and uniqueness of
the GKW decomposition completes the proof.
\end{proof}

%\begin{remark}[Complexification]
%The quadratic problem in~\eqref{eq:intro-objective} is stated for real-valued claims and strategies. Theorem~\ref{thm:exponential-gkw} is an intermediate complexified identity, obtained by applying the real GKW projection separately to the real and imaginary parts. On the Lewis contour $w_\lambda=1/2+\mathrm{i}\lambda$, conjugate symmetry pairs the contributions at $\lambda$ and $-\lambda$, so the call strategy in Corollary~\ref{cor:call-hedge} is real-valued.
%\end{remark}

\begin{remark}[Interpretation of the jump correction]
A mark $z$ changes the stock by the relative amount $e^{-\Lambda z}-1$ and
changes $H(w,T)$ by the relative amount
$e^{(\psi_w(T-t)-\Lambda w)z}-1$. Since the two jumps are driven by the same
random measure, the numerator term
\[
 \int_{(0,\infty)}
   \left(e^{(\psi_w(T-t)-\Lambda w)z}-1\right)
   \left(e^{-\Lambda z}-1\right)\nu(\dd z)
\]
is their contribution to the predictable covariance between $H(w,T)$ and the
stock. The denominator term
\[
 \int_{(0,\infty)}
   \left(e^{-\Lambda z}-1\right)^2\nu(\dd z)
\]
is the jump contribution to the stock's predictable quadratic variation.
Together these terms adjust the bracket ratio defining the GKW holding for the
claim's exposure to stock jumps. The adjustment vanishes when $\nu=0$.
\end{remark}

\paragraph{Residual risk dynamics.}
Equation~\eqref{eq:residual-dynamics} makes the unhedgeable component
in~\eqref{eq:gkw-exponential} explicit.
The Brownian and marked-jump coefficients are precisely the exposures left
after projecting those of $H(w,T)$ onto the stock exposures. Their predictable
covariation with $S$ vanishes because $B_t^w-\beta_t^wD=0$.

Define
\begin{align}
 A_t^w
 &:=(1-\rho^2)|w|^2+|\rho w+\sqrt c\,\psi_t^w|^2
    +\int_{(0,\infty)}|r_t^w(z)|^2\nu(\dd z),
 \label{eq:A-def}\\
 \Gamma_t^w&:=A_t^w-\frac{|B_t^w|^2}{D}.
 \label{eq:Gamma-def}
\end{align}
By the Cauchy--Schwarz inequality, $\Gamma_t^w\ge0$.

The preceding residual dynamics yield a closed form for the minimum error.

\begin{corollary}[Minimum error for an exponential claim]
\label{cor:error-exponential}
Under the assumptions of Theorem~\ref{thm:exponential-gkw},
\begin{equation}
\label{eq:error-exponential}
 \E^\Q[|L_T^w|^2]
 =\E^\Q\left[
   \int_0^T|H_{t-}(w,T)|^2V_t\Gamma_t^w\dd t
 \right].
\end{equation}
Thus $\Gamma_t^w$ is the instantaneous unspanned-risk density per unit of
$|H_{t-}|^2V_t$.
\end{corollary}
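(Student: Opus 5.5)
The plan is to compute $\E^\Q[|L_T^w|^2]$ as the expected predictable quadratic variation $\E^\Q[\langle L^w,\overline{L^w}\rangle_T]$ of the residual. This is legitimate because Theorem~\ref{thm:exponential-gkw} already shows that $L^w$ is a square-integrable complex martingale with $L_0^w=0$. For such a martingale, $\E[|L_T|^2]=\E[\langle L,\overline L\rangle_T]$; this follows by applying the real identity to $\operatorname{Re}L$ and $\operatorname{Im}L$ and adding. It therefore suffices to identify $\dd\langle L^w,\overline{L^w}\rangle_t$ from the explicit dynamics~\eqref{eq:residual-dynamics}.

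\textbf{Bracket density.} From~\eqref{eq:residual-dynamics}, the independence of $W_1$ and $W_2$, and the compensator~\eqref{eq:compensator}, the Hermitian bracket has density
\[
 |H_{t-}|^2V_t\Bigl[(1-\rho^2)|w-\beta_t^w|^2+|w\rho+\sqrt c\,\psi_t^w-\rho\beta_t^w|^2
 +\int_{(0,\infty)}|r_t^w(z)-\beta_t^wq(z)|^2\nu(\dd z)\Bigr].
\]
Next I expand each square around the $\beta$-free terms, which are exactly the three pieces of $A_t^w$. The cross terms collect into $-2\operatorname{Re}\bigl(\overline{\beta_t^w}\,\overline{\phantom{|}}\!\!\!\cdot\bigr)$ of
\[
 (1-\rho^2)w+\rho(\rho w+\sqrt c\,\psi_t^w)+\int r_t^wq\,\dd\nu=B_t^w,
\]
using that $q$ is real. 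The quadratic terms in $\beta$ give $|\beta_t^w|^2\bigl((1-\rho^2)+\rho^2+\int q^2\dd\nu\bigr)=|\beta_t^w|^2D$. Hence the bracket equals
\[
 A_t^w-2\operatorname{Re}(\overline{\beta_t^w}B_t^w)+|\beta_t^w|^2D=A_t^w-\frac{|B_t^w|^2}{D}=\Gamma_t^w,
\]
since $\beta_t^w=B_t^w/D$. This is simply the Pythagorean identity for projecting the exposure vector of $H$ onto that of $S$ in $L^2$ of the combined Brownian/jump coordinate space.

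\textbf{Integrability.} The remaining step, which is also the only delicate one, is to justify taking expectations. Because the integrand is nonnegative, $\E[\langle L^w,\overline{L^w}\rangle_T]$ equals the right side of~\eqref{eq:error-exponential} in $[0,\infty]$. Its finiteness, and the identity itself, follow from square-integrability of $L^w$ via the standard fact that $|L|^2-\langle L,\overline L\rangle$ is a uniformly integrable martingale. I would make this rigorous by localizing with stopping times $\tau_n$ along which the stochastic integrals in~\eqref{eq:residual-dynamics} are $L^2$-martingales. I would then pass to the limit using monotone convergence on the bracket side and $L^2$-convergence $L^w_{\tau_n\wedge T}\to L^w_T$ (Doob) on the other side. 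The finiteness of $\int r_t^w q\,\dd\nu$, $\int|r_t^w|^2\dd\nu$ and $D$, recorded after~\eqref{eq:beta-def}, ensures that every expansion step above is between finite quantities.
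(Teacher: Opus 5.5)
Your proposal is correct and matches the paper's proof: you compute the Hermitian predictable bracket of $L^w$ from the residual dynamics, reduce its density to $|H_{t-}(w,T)|^2V_t\bigl(A_t^w-|B_t^w|^2/D\bigr)$ using $\beta_t^w=B_t^w/D$, and take expectations via the martingale isometry for the square-integrable residual. The paper states the bracket density directly, so your version just adds detail: the explicit expansion of the cross and quadratic terms, and the localization argument justifying $\E^\Q[|L_T^w|^2]=\E^\Q[\langle L^w,\overline{L^w}\rangle_T]$.
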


\begin{proof}
The predictable quadratic variation of~\eqref{eq:residual-dynamics} is
\[
 \dd\langle L^w,\overline{L^w}\rangle_t
 =|H_{t-}(w,T)|^2V_t
 \left(A_t^w-\frac{|B_t^w|^2}{D}\right)\dd t.
\]
Taking expectations and applying the martingale isometry
gives~\eqref{eq:error-exponential}.
\end{proof}

\begin{remark}[Sanity checks]
For $w=1$, the Riccati solution is $\psi_1=0$ and $r_t^1=q$. Consequently,
$B_t^1=D$, $\beta_t^1=1$, and the residual
in~\eqref{eq:residual-dynamics} vanishes. Since $H_t(1,T)=S_t/S_0$,
formula~\eqref{eq:theta-explicit} gives the exact hedge $1/S_0$, as it must for
the normalized stock claim $e^{X_T}=S_T/S_0$.

If $\nu=0$, all jump terms disappear and~\eqref{eq:Gamma-def} reduces to
\[
 \Gamma_t^w=c(1-\rho^2)|\psi_t^w|^2.
\]
Thus, for a nontrivial claim with $\psi_t^w\ne0$ and $|\rho|<1$, a stock-only
strategy generally leaves volatility Brownian risk. Perfect replication in
continuous rough Heston is obtained by adding forward-variance exposures
\citep{eleuchrosenbaum2018}.
\end{remark}

\subsection{Fourier-representable payoffs}

Let $I\subset\R$, let $a:I\to\C$ and $w:I\to\C$ be measurable, and suppose
that for almost every $\lambda\in I$
$\operatorname{Re}w(\lambda)\in[0,1]$,
$e^{w(\lambda)X_T}\in L^2(\Q)$, and that the $L^2(\Q)$-valued map
$\lambda\mapsto a(\lambda)e^{w(\lambda)X_T}$ is strongly measurable. Define
the square-integrable-martingale norm by
\begin{align}
 \|H(w,T)\|_{\mathcal H^2}^2
 &:=
 |H_0(w,T)|^2+
 \E^\Q\!\left[\langle H(w,T),\overline{H(w,T)}\rangle_T\right]
 \nonumber\\
 &=\E^\Q[|e^{wX_T}|^2].
 \label{eq:H2-norm}
\end{align}
Assume that the preceding $L^2(\Q)$-valued map is Bochner integrable, or
equivalently that
\begin{equation}
\label{eq:fourier-stochastic-fubini}
 \int_I|a(\lambda)|
 \|H(w(\lambda),T)\|_{\mathcal H^2}\dd\lambda<\infty,
\end{equation}
and define the payoff by the $L^2(\Q)$ Bochner integral
\begin{equation}
\label{eq:general-fourier-payoff}
 Y=\int_I a(\lambda)e^{w(\lambda)X_T}\dd\lambda.
\end{equation}
Indeed, by~\eqref{eq:H2-norm},
\begin{equation}
 \left\|a(\lambda)e^{w(\lambda)X_T}\right\|_{L^2(\Q)}
 =|a(\lambda)|\,\|H(w(\lambda),T)\|_{\mathcal H^2},
 \label{eq:fourier-bochner-norm}
\end{equation}
so~\eqref{eq:fourier-stochastic-fubini} guarantees that $Y$ is well defined
and that
\[
 \|Y\|_{L^2(\Q)}
 \leq \int_I |a(\lambda)|
 \|H(w(\lambda),T)\|_{\mathcal H^2}\dd\lambda<\infty.
\]
Condition~\eqref{eq:fourier-stochastic-fubini} is a Bochner-space formulation
of the sufficient conditions used to interchange a Fourier representation and
the GKW decomposition in
\citet[Theorem~4.6 and Remark~4.7]{ditellahauboldkellerressel2019}.

The integrability condition allows the modewise GKW decompositions to be
synthesized under the Fourier integral.

\begin{theorem}[Fourier synthesis of the optimal hedge]
\label{thm:fourier-hedge}
Suppose Assumptions~\ref{ass:model}, \ref{ass:transform}, and~\ref{ass:l2}, as
well as the conditions above, hold. Then
\begin{align}
 M_t^Y&=\int_Ia(\lambda)H_t(w(\lambda),T)\dd\lambda,
 \label{eq:MY-fourier}\\
 \vartheta_t^Y&=\int_Ia(\lambda)\vartheta_t^{w(\lambda)}\dd\lambda,
 \label{eq:theta-fourier}\\
 L_t^Y&=\int_Ia(\lambda)L_t^{w(\lambda)}\dd\lambda.
 \label{eq:L-fourier}
\end{align}
In particular,~\eqref{eq:theta-fourier} is a semi-explicit strategy: each
integrand requires only the conditional transform~\eqref{eq:H-affine} and the
deterministic Riccati--Volterra solution~\eqref{eq:riccati-volterra}.
\end{theorem}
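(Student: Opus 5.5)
The plan is to view all three identities as statements about bounded linear maps on Hilbert spaces and to pass them under the Bochner integral. By~\eqref{eq:fourier-bochner-norm} and~\eqref{eq:fourier-stochastic-fubini}, the map $\lambda\mapsto a(\lambda)e^{w(\lambda)X_T}$ is Bochner integrable in $L^2(\Q;\C)$. Three operations on $L^2(\Q;\C)$ are bounded and linear:
\begin{enumerate}[label=(\alph*)]
\item the conditional expectation $\Pi_t:\xi\mapsto\E^\Q[\xi|\F_t]$, for each fixed $t$;
\item the GKW integrand map $\xi\mapsto\vartheta^\xi\in L^2(S;\C)$, which is the composition of centering with the inverse of the stochastic-integral isometry on the closed gain space $\mathcal G^2(S)$ and the orthogonal projection onto it, so that $\|\vartheta^\xi\|_{L^2(S)}\le\|\xi\|_{L^2(\Q)}$;
\item the residual map $\xi\mapsto L_t^\xi=\Pi_t\xi-\E^\Q[\xi]-\int_0^t\vartheta^\xi_s\dd S_s$, which is again bounded into $L^2(\Q)$ by Doob and the isometry.
\end{enumerate}
Bounded linear operators commute with Bochner integrals. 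This gives~\eqref{eq:MY-fourier}, the identity $\vartheta^Y=\int_I a(\lambda)\vartheta^{w(\lambda)}\dd\lambda$ as an $L^2(S;\C)$-valued Bochner integral, and~\eqref{eq:L-fourier}. In each case the modewise objects are identified by Theorem~\ref{thm:exponential-gkw}, where $\vartheta^{w(\lambda)}$ is the GKW integrand of $e^{w(\lambda)X_T}$ and $L^{w(\lambda)}$ is its residual.

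Strong measurability of the integrands in $\lambda$ comes for free, since each integrand is the image of a strongly measurable map under a bounded operator. Real-valuedness of $\vartheta^Y$ when $Y$ is real follows from applying the real GKW projection to the real and imaginary parts, as stipulated earlier.

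The main obstacle is interpreting~\eqref{eq:theta-fourier} pointwise in $(\omega,t)$, as the phrase ``semi-explicit strategy'' suggests, rather than only as an $L^2(S)$-valued Bochner integral. To handle it I would show
\[
 \E^\Q\!\left[\int_0^T\Big(\int_I|a(\lambda)||\vartheta_t^{w(\lambda)}|\dd\lambda\Big)^2\dd\langle S\rangle_t\right]^{1/2}
 \le\int_I|a(\lambda)|\,\|\vartheta^{w(\lambda)}\|_{L^2(S)}\dd\lambda
 \le\int_I|a(\lambda)|\,\|H(w(\lambda),T)\|_{\mathcal H^2}\dd\lambda<\infty .
\]
The first inequality is Minkowski's integral inequality. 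It requires joint measurability of $(\lambda,\omega,t)\mapsto\vartheta_t^{w(\lambda)}$, which I would take from the explicit formula~\eqref{eq:theta-explicit}: $H_{t-}$ is built from the càdlàg affine transform~\eqref{eq:H-affine}, which depends measurably on $w$, and $\psi_w$ depends measurably on $w$ by the continuity of the Riccati solution in its parameter. I would verify this for the relevant class of $w(\cdot)$, or else work with a jointly measurable version. The second inequality is the contraction property of the GKW map.

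Given this bound, the pointwise $\dd\lambda$-integral converges $\mu_S$-a.e. and defines a predictable process. Pairing it against arbitrary test elements $\eta\in L^2(S)$ and applying Fubini shows that it coincides with the Bochner integral. The same Fubini argument, applied to $H_t(w(\lambda),T)$, gives~\eqref{eq:MY-fourier} pointwise almost surely for each $t$.
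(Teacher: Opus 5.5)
Your proposal is correct and follows the paper's proof: the paper likewise commutes the Bochner integral with the isometry $\xi\mapsto(\E^\Q[\xi|\F_t])_{t\le T}$ into $\mathcal H^2$, and with the bounded GKW integrand and residual maps, inheriting strong measurability and Bochner integrability from their boundedness and the contraction bound; applying the conditional expectation at each fixed $t$, as you do, gives the same identity. Your additional Minkowski--Fubini step for a pointwise reading of \eqref{eq:theta-fourier} goes beyond the paper, which only asserts the identity in $L^2(S)$, i.e.\ up to $\mu_S$-null sets; it is sound once you fix a $\mathcal B(I)\otimes$predictable jointly measurable version, and such a version exists for any strongly measurable $L^2(S)$-valued map, so you need not extract it from \eqref{eq:theta-explicit}.
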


\begin{proof}
Let $\mathsf J:L^2(\Q)\to\mathcal H^2$ map a terminal random variable $\xi$
to the martingale $(\E^\Q[\xi|\F_t])_{t\leq T}$. By~\eqref{eq:H2-norm},
$\mathsf J$ is a linear isometry. Since bounded linear
maps commute with Bochner integration,
\[
 \mathsf JY
 =\int_Ia(\lambda)\mathsf J(e^{w(\lambda)X_T})\dd\lambda
 =\int_Ia(\lambda)H(w(\lambda),T)\dd\lambda,
\]
which proves~\eqref{eq:MY-fourier}.

The maps taking a martingale to the stochastic-integral component and to the
zero-initial residual in its GKW decomposition are bounded linear orthogonal
projections in $\mathcal H^2$. In particular, the martingale isometry and the
contraction property give
\[
 \|\vartheta^{w}\|_{L^2(S)}
 =\left\|\int_0^\cdot\vartheta_s^{w}\dd S_s\right\|_{\mathcal H^2}
 \leq\|H(w,T)\|_{\mathcal H^2},
 \qquad
 \|L^{w}\|_{\mathcal H^2}
 \leq\|H(w,T)\|_{\mathcal H^2}.
\]
More precisely, the integrand map is the GKW projection followed by the
inverse of the stochastic-integral isometry onto its closed range. As a
bounded linear map into $L^2(S)$, it preserves strong measurability.
Thus~\eqref{eq:fourier-stochastic-fubini} also makes the corresponding
$a$-weighted maps Bochner integrable in $L^2(S)$ and $\mathcal H^2$.
Commuting the two GKW projections with the Bochner integral now
yields~\eqref{eq:theta-fourier} and~\eqref{eq:L-fourier}. The strategy
identity is understood in $L^2(S)$ and hence up to $\mu_S$-null sets.
\end{proof}

For the error of a Fourier portfolio it is useful to introduce the cross
kernel
\begin{align}
 A_t^{w,u}
 &:=(1-\rho^2)w\overline u
  +(\rho w+\sqrt c\,\psi_t^w)
    \overline{(\rho u+\sqrt c\,\psi_t^u)}
\nonumber\\
&\quad+\int_{(0,\infty)}r_t^w(z)\overline{r_t^u(z)}\nu(\dd z),
 \label{eq:cross-A}\\
\Gamma_t^{w,u}
 &:=A_t^{w,u}-\frac{B_t^w\overline{B_t^u}}{D}.
 \label{eq:cross-Gamma}
\end{align}
Direct calculation from~\eqref{eq:residual-dynamics} gives
\[
 \dd\langle L^w,\overline{L^u}\rangle_t
 =H_{t-}(w,T)\overline{H_{t-}(u,T)}
   V_t\Gamma_t^{w,u}\dd t.
\]
The martingale isometry and Fubini's theorem for Bochner integrals therefore
give
\begin{align}
 \E^\Q[|L_T^Y|^2]
 &=\int_I\int_Ia(\lambda)\overline{a(\eta)}
 \E^\Q\left[
  \int_0^T
    H_{t-}(w(\lambda),T)
    \overline{H_{t-}(w(\eta),T)}
    V_t\Gamma_t^{w(\lambda),w(\eta)}\dd t
 \right]\dd\lambda\dd\eta.
\label{eq:error-general-fourier}
\end{align}
The double integral is absolutely convergent: the Cauchy--Schwarz inequality
for terminal residuals bounds its absolute integrand by
$|a(\lambda)a(\eta)|
\|H(w(\lambda),T)\|_{\mathcal H^2}
\|H(w(\eta),T)\|_{\mathcal H^2}$, which is integrable
by~\eqref{eq:fourier-stochastic-fubini}.
Equation~\eqref{eq:error-general-fourier} is a direct spectral representation
of the minimum error. An enlarged affine transform can further reduce its
expectation whenever the corresponding moment domain is available.

\subsection{European calls}

Let $\cK>0$ be a strike and define $w_\lambda=1/2+\mathrm{i}\lambda$. The
representation of \citet{lewis2001}, used in \citet{bondi2024}, gives the
time-$t$ call-price martingale
\begin{equation}
\label{eq:conditional-call}
 C_t(T,\cK)
 =S_t-\frac{\sqrt{S_0\cK}}{2\pi}
  \int_\R
   \frac{e^{\mathrm{i}\lambda\log(S_0/\cK)}H_t(w_\lambda,T)}
        {\lambda^2+1/4}\dd\lambda.
\end{equation}
Here the integral is an $L^2(\Q)$ Bochner integral. It is real because the
integrands at $\lambda$ and $-\lambda$ are complex conjugates. Indeed,
\[
 \|H(w_\lambda,T)\|_{\mathcal H^2}^2
 =\E^\Q[|e^{w_\lambda X_T}|^2]
 =\E^\Q[e^{X_T}]=1,
\]
and
\[
 \int_\R\frac{\dd\lambda}{\lambda^2+1/4}<\infty.
\]
The map $\lambda\mapsto e^{w_\lambda X_T}$ is continuous in $L^2(\Q)$
by dominated convergence, since squared differences are bounded by
$4e^{X_T}$. In particular, the required strong measurability holds.
Consequently~\eqref{eq:fourier-stochastic-fubini} holds automatically for the
call weight. The call itself belongs to $L^2$ because
$0\leq(S_T-\cK)^+\leq S_T$ and $S_T\in L^2(\Q)$.

Applying Theorem~\ref{thm:fourier-hedge} to this Lewis representation gives
the call hedge.

\begin{corollary}[Variance-optimal hedge for a European call]
\label{cor:call-hedge}
Suppose Assumptions~\ref{ass:model}, \ref{ass:transform}, and~\ref{ass:l2}
hold. The variance-optimal number of shares for the payoff $(S_T-\cK)^+$ is
\begin{equation}
 \vartheta_t^{\mathrm{call}}
 =1-\frac{\sqrt{S_0\cK}}{2\pi}
  \int_\R
   \frac{e^{\mathrm{i}\lambda\log(S_0/\cK)}\vartheta_t^{w_\lambda}}
        {\lambda^2+1/4}\dd\lambda,
\label{eq:call-hedge}
\end{equation}
where $\vartheta^{w_\lambda}$ is given by~\eqref{eq:theta-explicit}. The
residual hedging martingale is
\begin{equation}
\label{eq:call-residual}
 L_t^{\mathrm{call}}
 =-\frac{\sqrt{S_0\cK}}{2\pi}
  \int_\R
   \frac{e^{\mathrm{i}\lambda\log(S_0/\cK)}L_t^{w_\lambda}}
        {\lambda^2+1/4}\dd\lambda.
\end{equation}
Its variance is obtained by inserting the call Fourier weight
into~\eqref{eq:error-general-fourier}.
\end{corollary}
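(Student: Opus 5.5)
The plan is to apply Theorem~\ref{thm:fourier-hedge} to the Lewis representation~\eqref{eq:conditional-call}. The only extra ingredient is the stock term $S_t$, which is hedged exactly by one share.

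First, write the payoff as $Y=(S_T-\cK)^+=S_T-Y'$. Here
\[
 Y'=\frac{\sqrt{S_0\cK}}{2\pi}\int_\R a_0(\lambda)e^{w_\lambda X_T}\dd\lambda,
 \qquad
 a_0(\lambda)=\frac{e^{\mathrm{i}\lambda\log(S_0/\cK)}}{\lambda^2+1/4},
\]
with $I=\R$ and $w(\lambda)=w_\lambda$. The Bochner condition~\eqref{eq:fourier-stochastic-fubini} was already verified in the text: $\|H(w_\lambda,T)\|_{\mathcal H^2}=1$, the weight is integrable, and $\lambda\mapsto e^{w_\lambda X_T}$ is $L^2$-continuous. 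Also $\operatorname{Re}w_\lambda=1/2\in[0,1]$, so each mode satisfies the hypotheses of Theorem~\ref{thm:exponential-gkw}. At $t=T$, \eqref{eq:conditional-call} gives $C_T=(S_T-\cK)^+$ as an identity in $L^2(\Q)$.

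Second, use linearity and uniqueness of the GKW decomposition~\eqref{eq:gkw-general}. The martingale $S$ has the decomposition $S_t=S_0+\int_0^t1\,\dd S_s+0$. The constant strategy $1$ lies in $L^2(S)$ because $\E[\langle S\rangle_T]=\E[S_T^2]-S_0^2<\infty$ by Assumption~\ref{ass:l2}. Theorem~\ref{thm:fourier-hedge}, applied to $Y'$, gives the integrand $\frac{\sqrt{S_0\cK}}{2\pi}\int a_0\vartheta^{w_\lambda}\dd\lambda$ and the residual $\frac{\sqrt{S_0\cK}}{2\pi}\int a_0 L^{w_\lambda}\dd\lambda$. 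Subtracting the two decompositions gives a decomposition of $M^Y$. Its integrand is~\eqref{eq:call-hedge} and its residual is~\eqref{eq:call-residual}. The residual is strongly orthogonal to $S$ because the set of square-integrable martingales strongly orthogonal to $S$ is a closed linear subspace of $\mathcal H^2$, and that subspace is stable under Bochner integration. Uniqueness of the GKW decomposition then identifies $\vartheta^{\mathrm{call}}$ as the variance-optimal strategy via~\eqref{eq:optimal-capital-error}.

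Third, check that the result is real-valued. The map $\lambda\mapsto-\lambda$ sends $w_\lambda$ to $\overline{w_\lambda}$, and $\psi_{\overline w}=\overline{\psi_w}$ by uniqueness for~\eqref{eq:riccati-volterra}, since $\Rr(\overline u,\overline v)=\overline{\Rr(u,v)}$. Hence $H(\overline w,T)=\overline{H(w,T)}$ and $\vartheta^{\overline w}=\overline{\vartheta^w}$, and the weight satisfies $a_0(-\lambda)=\overline{a_0(\lambda)}$. The integrals in~\eqref{eq:call-hedge} and~\eqref{eq:call-residual} are therefore real. This matches taking real parts of the complex decomposition, which by the complexification convention is the real GKW decomposition.

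Finally, the variance formula follows by substituting $a=-\frac{\sqrt{S_0\cK}}{2\pi}a_0$ into~\eqref{eq:error-general-fourier}, which holds by the absolute convergence already established.

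The main point requiring care is the second step: the complex-valued Bochner-integral identities must be shown to yield a genuine real element of $L^2(S)$ identified $\mu_S$-a.e. I would handle this through the closedness and projection argument just described, rather than by pointwise evaluation of the $\lambda$-integral.
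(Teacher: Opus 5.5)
Your proposal is correct and follows essentially the same route as the paper: split off $S_T$ (integrand one, zero residual), apply Theorem~\ref{thm:fourier-hedge} to the Lewis integral, and combine the two decompositions by linearity and uniqueness of the GKW decomposition. Your conjugate-symmetry check that the integrals are real, and your substitution of the call weight into~\eqref{eq:error-general-fourier}, spell out details that the paper handles in the text around the corollary rather than in its proof.
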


\begin{proof}
The terminal Lewis identity holds in $L^2(\Q)$ by the integrability argument
above. The claim $S_T$ has GKW integrand equal to one and zero residual.
Applying Theorem~\ref{thm:fourier-hedge} to the remaining Fourier integral and
using linearity of the GKW projection
gives~\eqref{eq:call-hedge}--\eqref{eq:call-residual}.
\end{proof}

\begin{remark}[Exponential-mark specialization]
The calibrated specification of \citet{bondi2024} uses
$\nu(\dd z)=e^{-z}\dd z$. In this case the jump integrals in the strategy are
available in closed form. Let $a_t^w=\psi_t^w-\Lambda w$. Then
\begin{align}
 D
 &=2+\frac{1}{1+2\Lambda}-\frac{2}{1+\Lambda},
 \label{eq:D-exp-marks}\\
 J(a_t^w)
 &:=\int_0^\infty(e^{a_t^wz}-1)(e^{-\Lambda z}-1)e^{-z}\dd z
 \nonumber\\
 &=\frac{1}{1-a_t^w+\Lambda}
   -\frac{1}{1-a_t^w}
   -\frac{1}{1+\Lambda}+1,
 \label{eq:J-exp-marks}
\end{align}
provided the real parts of the denominators are positive. Therefore
\begin{equation}
\label{eq:beta-exp-marks}
 \beta_t^w=\frac{w+\rho\sqrt c\,\psi_t^w+J(a_t^w)}{D}.
\end{equation}
This reduces evaluation of the hedge to the Fourier integral and the
Riccati--Volterra solve.
\end{remark}

\section{Convergence of approximate hedges in the original market}
\label{sec:original-market-stability}

We now establish convergence of predictable kernel-regularized hedges in the
original rough Hawkes--Heston market. The analysis combines stability of
the Riccati equation, predictable reconstruction of the singular memory,
and weighted control of the common-jump exposures. Throughout this section,
the probability space, the stock $S$, the variance $V$, the driver $Z$, and
the initial curve $g_0$ are fixed. On this market, we regularize the
deterministic kernel in the hedge formula and increase the Fourier cutoff.
The holdings are defined by exact Riccati solutions, history integrals, and
Fourier integrals using information available strictly before the trading
time. Their gains are continuous-time stochastic integrals against $S$.

\begin{assumption}[Kernel regularization]
\label{ass:original-kernel-regularization}
For every $n$, the kernel $K_n$ satisfies the kernel conditions in
Assumption~\ref{ass:model}, and
\[
 \epsilon_n:=\|K_n-K\|_{L^1(0,T)}\longrightarrow0.
\]
\end{assumption}

The integrated first-moment bound~\eqref{eq:original-integrated-first-moment}
follows from the standing model assumptions and will control the stochastic
history terms. Write
$w_\lambda=1/2+\mathrm{i}\lambda$ and let $\psi_w^n$ solve
\[
 \psi_w^n=K_n*\Rr(w,\psi_w^n).
\]
The Riccati existence result recalled in Section~\ref{sec:model} applies to
each $K_n$ and gives a global continuous solution with
$\operatorname{Re}\psi_w^n\leq0$.

The first step is to control both the Riccati solutions and the induced
memory terms uniformly on compact Fourier intervals.

\begin{lemma}[Riccati and memory approximation]
\label{lem:original-deterministic-stability}
For every $R<\infty$, there is a constant $C_R<\infty$ such that, for all
sufficiently large $n$,
\begin{equation}
\label{eq:original-riccati-stability}
 \sup_{|\lambda|\leq R}\sup_{0\leq s\leq T}
 |\psi_{w_\lambda}^n(s)-\psi_{w_\lambda}(s)|\leq C_R\epsilon_n.
\end{equation}
The solutions are uniformly bounded on this compact Fourier interval.
Set $F_n^w(s)=\Rr(w,\psi_w^n(s))$ and $F^w(s)=\Rr(w,\psi_w(s))$, and define
\begin{align}
 a_n^w(t,r)&=\int_t^T F_n^w(T-s)K_n(s-r)\dd s,
       &&0\leq r\leq t\leq T,\label{eq:original-memory-coefficient}\\
 U_n^w(t)&=\int_t^T F_n^w(T-s)g_0(s)\dd s
       +\int_{[0,t)}a_n^w(t,r)\dd Z_r.
       \label{eq:original-log-memory}
\end{align}
Let $a^w,U^w$ denote the same expressions with $K,\psi_w$. Then
\begin{equation}
\label{eq:original-memory-stability}
 \sup_{|\lambda|\leq R}\sup_{r\leq t}|a_n^{w_\lambda}(t,r)-a^{w_\lambda}(t,r)|
 \leq C_R\epsilon_n,
 \qquad
 \sup_{|\lambda|\leq R}\E^\Q\!\int_0^T
 |U_n^{w_\lambda}(t)-U^{w_\lambda}(t)|\dd t
 \leq C_R\epsilon_n.
\end{equation}
\end{lemma}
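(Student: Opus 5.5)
The proof has three stages: deterministic Riccati stability, the memory coefficients, and the stochastic memory term.

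\textbf{Step 1: a priori bounds for the Riccati solutions.} First I would show that $\psi^n_{w_\lambda}$ and $\psi_{w_\lambda}$ are bounded uniformly in $|\lambda|\le R$, $s\le T$ and large $n$. For $\operatorname{Re}w=1/2$ and $\operatorname{Re}v\le0$, the real part of $\Rr(w,v)$ has the quadratic term $\frac c2(\operatorname{Re}v)^2$ with a favourable sign structure. The standard argument of \citet[Theorem~3.1]{bondi2024} gives a bound on $|\psi|$ depending only on $|w|$, $T$, and $\|K\|_{L^1(0,T)}$ or $\|K\|_{L^2(0,T)}$-type norms of the kernel. If only the $L^1$ norm enters, that norm is uniformly bounded in $n$ since $\epsilon_n\to0$. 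This is the step I expect to be the main obstacle, because the bound must be uniform in the kernel. If the a priori bound needs more, such as the resolvent structure, I would instead argue by continuity. On the set where $|\psi^n|\le 2M$, with $M:=\sup|\psi|$, the Lipschitz estimate below shows $|\psi^n-\psi|\le C\epsilon_n<M$ for large $n$, so $\psi^n$ never reaches level $2M$.

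\textbf{Step 2: Lipschitz stability of the Riccati equation.} Let $\Lambda_R$ be a bounded set containing all values of $\psi$ and $\psi^n$. On $\{\operatorname{Re}v\le0,\ |v|\le \Lambda_R\}$ and $|w|\le 1/2+R$, the map $v\mapsto\Rr(w,v)$ is Lipschitz with constant $L_R$ and bounded by some $M_R$; the jump integral is controlled by $\int z^2\nu(\dd z)<\infty$. Write
\[
\psi^n-\psi=(K_n-K)*\Rr(w,\psi^n)+K*\bigl(\Rr(w,\psi^n)-\Rr(w,\psi)\bigr).
\]
Then $\delta:=|\psi^n-\psi|$ satisfies $\delta\le M_R\epsilon_n+L_R\,K*\delta$. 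The Volterra Gronwall lemma for the nonnegative kernel $L_RK$ yields $\delta\le M_R\epsilon_n(1+\|R_{L_RK}\|_{L^1(0,T)})$, where $R_{L_RK}$ is the resolvent of the second kind of $L_RK$. This proves \eqref{eq:original-riccati-stability} with constants uniform on $|\lambda|\le R$. It also gives $|F^w_n-F^w|\le L_RC_R\epsilon_n$ and $|F^w_n|\le M_R$.

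\textbf{Step 3: the memory coefficients.} Split
\[
a^w_n-a^w=\int_t^T(F_n^w-F^w)(T-s)K_n(s-r)\dd s+\int_t^TF^w(T-s)(K_n-K)(s-r)\dd s .
\]
This is bounded by $L_RC_R\epsilon_n\|K_n\|_{L^1(0,T)}+M_R\epsilon_n$, which gives the first estimate in \eqref{eq:original-memory-stability}.

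\textbf{Step 4: the stochastic memory term.} The deterministic part of $U^w_n-U^w$ is at most $L_RC_R\epsilon_n\,T\sup_{[0,T]}g_0$. For the stochastic part, decompose $\dd Z$ into its drift, Brownian and compensated-jump pieces. The drift piece is bounded by $|b|\sup|a^w_n-a^w|\,\E\int_0^TV\,\dd r$, which is finite by \eqref{eq:original-integrated-first-moment}.

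For the compensated jump piece I would avoid the $L^2$ isometry, which needs second moments. Instead I would bound it by the total variation of the random measure in $L^1$: $\E\int_0^T\!\int|a^w_n-a^w|\,z\,(\mu+V\dd r\,\nu)(\dd r,\dd z)$. This is at most $2\sup|a^w_n-a^w|\int z\,\nu(\dd z)\,\E\int_0^TV\,\dd r$. The integral $\int z\,\nu$ may be infinite near $0$, so I would split the marks. Large marks $z>1$ are handled this way, using $\int_{(1,\infty)}z\,\nu(\dd z)\le\int z^2\nu(\dd z)<\infty$. Small marks, together with the Brownian piece, are handled by the Burkholder--Davis--Gundy inequality with $p=1$:
\[
\E\Bigl|\int_{[0,t)}h(t,r)\,\dd N_r\Bigr|\le C\,\E\Bigl[\Bigl(\int_0^t h^2\,\dd[N]_r\Bigr)^{1/2}\Bigr].
\]
By Jensen, the right-hand side is at most $C\sup|h|\,\bigl(c+\int_{(0,1]}z^2\nu(\dd z)\bigr)^{1/2}\bigl(\E\int_0^TV\bigr)^{1/2}$. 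Here $h=a^w_n-a^w$ depends on $t$, but it is deterministic and bounded, so for each fixed $t$ the BDG inequality applies to the martingale $r\mapsto\int_{[0,r)}h(t,u)\,\dd N_u$.

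Integrating the bound over $t\in[0,T]$ gives the second estimate in \eqref{eq:original-memory-stability}, with constants uniform in $|\lambda|\le R$.
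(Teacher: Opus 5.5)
Your argument is correct, but it differs from the paper's in two places.

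\textbf{Riccati step.} The paper splits
$\psi^n_w-\psi_w=K_n*(\Rr(w,\psi^n_w)-\Rr(w,\psi_w))+(K_n-K)*\Rr(w,\psi_w)$.
The kernel error is paired with the exact solution, which is bounded a priori, while the Lipschitz term is convolved with the moving kernel $K_n$. The paper then chooses $h$ with $L_R\int_0^hK<1/4$ and uses $L^1$ convergence to get $L_R\int_0^hK_n<1/2$ for large $n$. It iterates over successive intervals of length $h$, all before the approximating solution exits a larger bounded set.

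You pair the kernel error with $\Rr(w,\psi^n_w)$ and convolve the Lipschitz term with the fixed kernel $K$. A single Volterra--Gronwall inequality with the second-kind resolvent of $L_RK$ then gives an explicit $n$-independent constant, with no successive-interval iteration. The cost is that you need a bound on $\psi^n_w$ itself. Your exit-time fallback supplies it, and the paper needs the same exit-time device anyway for its Lipschitz constant.

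That fallback, however, presupposes $M=\sup_{|\lambda|\le R}\sup_{s\le T}|\psi_{w_\lambda}(s)|<\infty$. You leave your primary a priori route open, so you should state how you get this. The paper obtains it from continuous dependence of $\psi_w$ on $w$. That follows from the same Gronwall-plus-continuation argument applied to perturbations of $w$, using joint local Lipschitz continuity of $\Rr$ in $(w,v)$.

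\textbf{Stochastic memory term.} Your reason for avoiding the $L^2$ isometry is mistaken. The martingale part $M$ of $Z$ satisfies $\E^\Q\langle M\rangle_T=(c+m_2)\E^\Q\int_0^TV_r\dd r<\infty$. So the isometry needs only the integrated first moment \eqref{eq:original-integrated-first-moment} and $m_2=\int z^2\nu(\dd z)<\infty$. The paper simply applies the isometry to the whole martingale part and then Cauchy--Schwarz. Your BDG bound with $p=1$ plus Jensen produces exactly this quantity. Splitting the marks at $z=1$ and treating large marks by total variation is therefore valid but unnecessary.
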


The predictable version of~\eqref{eq:original-log-memory} can be constructed
explicitly. Define the history statistic
\[
 \widehat V_n(t)=g_0(t)+\int_{[0,t)}K_n(t-r)\dd Z_r.
\]
Its martingale-convolution part belongs to $L^2(\Q\otimes\dd t)$ and its
drift-convolution part belongs to $L^1(\Q\otimes\dd t)$, by
$K_n\in L^2(0,T)$ and~\eqref{eq:original-integrated-first-moment}.
In particular, $\E^\Q\int_0^T|\widehat V_n(t)|\dd t<\infty$;
as an argument of the conditional transform, this history statistic may take
either sign.
The process
\begin{equation}
\label{eq:original-memory-semimartingale}
 \mathscr U_{n,t}^w=U_n^w(0)
 +\int_0^t\psi_w^n(T-s)\dd Z_s
 -\int_0^t F_n^w(T-s)\widehat V_n(s)\dd s
\end{equation}
 is therefore a c\`adl\`ag adapted semimartingale. The stochastic Fubini theorem identifies
it with the right-endpoint version of~\eqref{eq:original-log-memory}:
its diagonal coefficient is $a_n^w(t,t)=\psi_w^n(T-t)$ and its time
derivative off the diagonal is $-F_n^w(T-t)K_n(t-r)$.
Thus $U_n^w(t)=\mathscr U_{n,t-}^w$, with
$\mathscr U_{n,0-}^w=\mathscr U_{n,0}^w$, gives the desired predictable version.
The same construction applies to $U^w$ using $K$. Parameter-dependent
stochastic integration gives versions jointly measurable with respect to
the predictable sigma-field and the Borel sigma-field in $\lambda$.
 In particular, the stochastic Fubini theorem and~\eqref{eq:H-affine} give
\begin{equation}
\label{eq:original-predictable-transform}
 H_{t-}(w,T)=\exp\{wX_{t-}+U^w(t)\},
 \qquad \Q\otimes\dd t\text{-almost everywhere}.
\end{equation}

\begin{proof}[Proof of Lemma~\ref{lem:original-deterministic-stability}]
The map $\Rr(w,v)$ is locally Lipschitz in $v$, uniformly for $w$ on a
compact part of the Lewis contour and $v$ in a bounded part of the closed
left half-plane. Indeed, differentiation of its jump term is justified by
$\int_{(0,\infty)}z^2\nu(\dd z)<\infty$ and
$|e^{az}-1|\leq |a|z$ for $\operatorname{Re}a\leq0$.
Continuous dependence on $w$ follows by the same local Volterra argument.
The global original solutions therefore form a bounded family on
$[0,T]\times[-R,R]$.

Until an approximating solution leaves a fixed larger bounded set,
subtraction of the two Riccati equations gives
\[
 |\psi_w^n-\psi_w|(s)
 \leq C_R\epsilon_n+L_R(K_n*|\psi_w^n-\psi_w|)(s).
\]
Choose $h>0$ so that $L_R\int_0^h K(s)\dd s<1/4$. Convergence in
$L^1$ gives $L_R\int_0^h K_n(s)\dd s<1/2$ for all sufficiently large $n$.
Taking suprema on successive intervals of length $h$, and using the
uniform bound on $\|K_n\|_{L^1(0,T)}$ for the preceding intervals,
yields~\eqref{eq:original-riccati-stability}. Its right-hand side tends to zero,
so the exit cannot occur. This also proves the asserted uniform bounds
and $\sup_{w,s}|F_n^w(s)-F^w(s)|\leq C_R\epsilon_n$.

Consequently,
\[
 |a_n^w-a^w|
 \leq \|F_n^w-F^w\|_\infty\|K_n\|_{L^1(0,T)}
       +\|F^w\|_\infty\epsilon_n
 \leq C_R\epsilon_n.
\]
To compare the stochastic terms, write
$Z_t=b\int_0^t V_r\dd r+M_t$, where $M$ is square-integrable and
\[
 \dd\langle M\rangle_r=(c+m_2)V_r\dd r,
 \qquad m_2=\int_{(0,\infty)}z^2\nu(\dd z).
\]
With $\delta a=a_n^w-a^w$, the two estimates
\begin{align*}
 \E^\Q\left|b\int_0^t\delta a(t,r)V_r\dd r\right|
 &\leq |b|\|\delta a\|_\infty\E^\Q\int_0^T V_r\dd r,\\
 \E^\Q\left|\int_{[0,t)}\delta a(t,r)\dd M_r\right|^2
 &\leq(c+m_2)\|\delta a\|_\infty^2\E^\Q\int_0^T V_r\dd r
\end{align*}
prove~\eqref{eq:original-memory-stability}, including the deterministic
$g_0$ term. Only the first integrated moment of $V$ is used.
For the history statistic, the corresponding time-integrated bounds are
$|b|\|K_n\|_{L^1}\E^\Q\int_0^T V_r\dd r$ for the absolute drift convolution
and $(c+m_2)\|K_n\|_{L^2}^2\E^\Q\int_0^T V_r\dd r$ for the squared martingale
convolution.
They justify the construction~\eqref{eq:original-memory-semimartingale}.
 For the stochastic Fubini theorem, the square-integrable kernel condition gives, for
each fixed $t$,
\[
 \int_t^T\!\left(\E^\Q\int_0^t K(s-r)^2V_r\dd r\right)^{1/2}\dd s
 \leq\sqrt T\,\|K\|_{L^2(0,T)}
       \left(\E^\Q\int_0^T V_r\dd r\right)^{1/2}<\infty,
\]
and likewise for $K_n$. The same bound on the triangle
$0\leq r\leq s\leq t$ justifies~\eqref{eq:original-memory-semimartingale}.
The finite-variation terms are absolutely
integrable by the $L^1$ kernel bound. This also
establishes~\eqref{eq:original-predictable-transform} in the stated sense.
\end{proof}

\paragraph{Predictable approximate exposures.}
Applying the conditional Cauchy--Schwarz inequality with respect to
$\mathcal F_{t-}$, together with the martingale property of $S$, gives
\begin{equation}
\label{eq:original-transform-envelope}
 |H_{t-}(w_\lambda,T)|^2\leq S_{t-}/S_0.
\end{equation}
Hence $\operatorname{Re}U^{w_\lambda}(t)\leq0$ almost everywhere.
Define
\begin{equation}
\label{eq:original-clipped-transform}
 \chi(z)=\exp\{\min(\operatorname{Re}z,0)+\mathrm{i}\operatorname{Im}z\},
 \qquad
 \widetilde H_n^w(t)=e^{wX_{t-}}\chi(U_n^w(t)).
\end{equation}
This truncation projects $e^{wX_{t-}+U_n^w(t)}$ onto the closed disk
of radius $\sqrt{S_{t-}/S_0}$. In particular,
\begin{equation}
\label{eq:original-approx-envelope}
 |\chi(U_n^w(t))|\leq1,
 \qquad
 |\widetilde H_n^w(t)|^2\leq S_{t-}/S_0.
\end{equation}
The bounded factor $\chi(U_n^w)$ ensures that the approximate transform
preserves the exact transform's random modulus envelope.
With $D$ and $q$ as in~\eqref{eq:D-def}, put
\begin{align}
 \beta_n^w(t)&=\frac{1}{D}\left[
 w+\rho\sqrt c\,\psi_w^n(T-t)
 +\int_{(0,\infty)}\left(e^{(\psi_w^n(T-t)-\Lambda w)z}-1\right)q(z)\nu(\dd z)
 \right],\label{eq:original-approx-beta}\\
 \widetilde\vartheta_n^w(t)&=
 \frac{\widetilde H_n^w(t)}{S_{t-}}\beta_n^w(t).
 \label{eq:original-approx-holding}
\end{align}

We now transfer the Riccati and memory stability estimates to these
predictable approximate holdings.

\begin{theorem}[Hedge convergence on compact Fourier intervals]
\label{thm:original-compact-hedge}
Suppose Assumptions~\ref{ass:model}, \ref{ass:transform}, \ref{ass:l2},
and~\ref{ass:original-kernel-regularization} hold. For each $R<\infty$,
the predictable strategies~\eqref{eq:original-approx-holding} belong to
$L^2(S)$ and
\begin{equation}
\label{eq:original-compact-convergence}
 \sup_{|\lambda|\leq R}
 \|\widetilde\vartheta_n^{w_\lambda}
       -\vartheta^{w_\lambda}\|_{L^2(S)}\longrightarrow0.
\end{equation}
Consequently, for every $a\in L^1([-R,R];\C)$, the corresponding Fourier
integrals of the approximate holdings converge in $L^2(S)$ to the
Fourier integral of the exact holdings.
\end{theorem}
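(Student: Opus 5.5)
We compare the approximate holding with the predictable form of the exact holding. By \eqref{eq:original-predictable-transform} and \eqref{eq:original-transform-envelope}, $\operatorname{Re}U^{w}(t)\le0$ almost everywhere, so $\chi(U^w(t))=e^{U^w(t)}$ and $H_{t-}(w,T)=e^{wX_{t-}}\chi(U^w(t))$. Theorem~\ref{thm:exponential-gkw} then gives, $\mu_S$-almost everywhere,
\[
 \vartheta^w_t=\frac{e^{wX_{t-}}\chi(U^w(t))}{S_{t-}}\beta^w_t .
\]
By \eqref{eq:S-bracket-proof} and $|e^{wX_{t-}}|^2=S_{t-}/S_0$ for $w=w_\lambda$,
\[
 \|\widetilde\vartheta_n^{w}-\vartheta^w\|_{L^2(S)}^2
 =\frac{D}{S_0}\,\E^\Q\!\int_0^T S_{t-}V_t\,
 \bigl|\chi(U_n^w(t))\beta_n^w(t)-\chi(U^w(t))\beta^w_t\bigr|^2\dd t .
\]
Admissibility follows from the same identity. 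The factor $\chi$ is bounded by one, and $\beta_n^w$ is bounded uniformly in $n$ large, $t$ and $|\lambda|\le R$ by the uniform Riccati bounds of Lemma~\ref{lem:original-deterministic-stability}. Hence the integral is bounded by a constant times $\E^\Q\int_0^TS_{t-}V_t\dd t=\E^\Q[\langle S\rangle_T]/(DS_0)\cdot S_0$, which is finite by Assumption~\ref{ass:l2}. So $\widetilde\vartheta^w_n\in L^2(S)$.

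For convergence, split the difference as $\chi(U_n)(\beta_n-\beta)+(\chi(U_n)-\chi(U))\beta$. The first term is bounded by $\sup_t|\beta^w_n(t)-\beta^w_t|$. This is at most $C_R\epsilon_n$ by \eqref{eq:original-riccati-stability}, because $v\mapsto e^{(v-\Lambda w)z}$ is Lipschitz with constant $z$ on the closed left half-plane and $zq(z)\in L^1(\nu)$. Multiplying by the finite weight $\E\int S_{t-}V_t\dd t$ sends this contribution to zero uniformly in $|\lambda|\le R$.

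The main obstacle is the second term. Here the only available control is the $L^1(\Q\otimes\dd t)$ estimate \eqref{eq:original-memory-stability} for $U_n-U$, while the weight $S_{t-}V_t$ is merely integrable. I would use that $\chi$ is $1$-Lipschitz (it is the composition of a contraction onto the half-plane with $\exp$, which is $1$-Lipschitz there) and bounded by one, so that
\[
 |\chi(U_n)-\chi(U)|^2\le 2\min(|U_n-U|,2).
\]
Then I would truncate the weight. For $M>0$, split $S_{t-}V_t=S_{t-}V_t\mathbf 1_{\{S_{t-}V_t\le M\}}+S_{t-}V_t\mathbf 1_{\{S_{t-}V_t>M\}}$. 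This gives
\[
 \E\!\int_0^T S_{t-}V_t|\chi(U_n)-\chi(U)|^2\dd t
 \le 2M\,\E\!\int_0^T|U_n^w-U^w|\dd t
 +4\,\E\!\int_0^T S_{t-}V_t\mathbf 1_{\{S_{t-}V_t>M\}}\dd t
 \le 2MC_R\epsilon_n+o_M(1),
\]
with $o_M(1)\to0$ as $M\to\infty$ by dominated convergence, independently of $\lambda$ and $n$. Multiplying by $\sup|\beta^w|$ and taking first $n\to\infty$ and then $M\to\infty$ gives \eqref{eq:original-compact-convergence} uniformly over $|\lambda|\le R$.

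For the Fourier statement, I need the maps $\lambda\mapsto\widetilde\vartheta^{w_\lambda}_n$ to be strongly measurable into $L^2(S)$. This follows from the jointly measurable versions noted after Lemma~\ref{lem:original-deterministic-stability}, together with the uniform bound. By Minkowski's inequality for Bochner integrals,
\[
 \Bigl\|\int_{-R}^R a(\lambda)(\widetilde\vartheta_n^{w_\lambda}-\vartheta^{w_\lambda})\dd\lambda\Bigr\|_{L^2(S)}
 \le\|a\|_{L^1}\sup_{|\lambda|\le R}\|\widetilde\vartheta_n^{w_\lambda}-\vartheta^{w_\lambda}\|_{L^2(S)}\to0 .
\]
By Theorem~\ref{thm:fourier-hedge}, the limit is the Fourier integral of the exact holdings.
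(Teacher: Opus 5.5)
You are following the paper's proof almost step for step. The shared ingredients are: the exact factorization of the transform modulus, the bounded $1$-Lipschitz map $\chi$, truncation of the weight $S_{t-}V_t$ against the $L^1$ memory estimate of Lemma~\ref{lem:original-deterministic-stability}, the Lipschitz bound on $\beta_n^w-\beta^w$, and Minkowski. But one step is wrong, and the whole argument rests on it: the integrability of the weight.

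You identify $\E^\Q\int_0^TS_{t-}V_t\,\dd t$ with $\E^\Q[\langle S\rangle_T]/D$ and conclude finiteness from Assumption~\ref{ass:l2}. By \eqref{eq:S-bracket-proof}, however, $\E^\Q[\langle S\rangle_T]/D=\E^\Q\int_0^TS_{t-}^2V_t\,\dd t$, with the stock squared. Your weight carries only one power of $S_{t-}$, because $|e^{w_\lambda X_{t-}}|^2=S_{t-}/S_0$ cancels one factor of $S_{t-}^2$ in the bracket. On $\{S_{t-}<1\}$ the weight $S_{t-}V_t$ dominates $S_{t-}^2V_t$. Square-integrability of $S$ gives no control there, and nothing in Assumption~\ref{ass:l2} prevents $V$ from being large where $S$ is small; the common jumps push them in exactly these directions. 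Three of your steps need $S_{t-}V_t\in L^1(\Q\otimes\dd t)$: admissibility, the $\beta$-term, and the dominated-convergence claim $o_M(1)\to0$ for the weight tail. As written, none of them is justified.

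The repair needs an ingredient you did not invoke: the integrated first moment \eqref{eq:original-integrated-first-moment} of the variance, which the paper takes from the affine Volterra moment estimate. With it, Cauchy--Schwarz gives
\[
 \E^\Q\int_0^T S_{t-}V_t\,\dd t
 \le\Bigl(\E^\Q\int_0^T S_{t-}^2V_t\,\dd t\Bigr)^{1/2}
    \Bigl(\E^\Q\int_0^T V_t\,\dd t\Bigr)^{1/2}<\infty .
\]
This is exactly \eqref{eq:original-weight-integrability}; alternatively, use $S_{t-}V_t\le(1+S_{t-}^2)V_t$. Once this bound is inserted, the rest of your proof goes through and coincides with the paper's.

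Two of your side choices are also fine. Writing $\chi$ as the projection onto the closed left half-plane followed by $\exp$, which is $1$-Lipschitz there, is a slightly cleaner justification of the Lipschitz property than the paper's. Replacing the paper's continuity-in-$\lambda$ argument by joint measurability plus the uniform bound is an acceptable route to strong measurability of the Fourier integrands.
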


\begin{proof}
By the stock bracket formula and square-integrability,
$\E^\Q\int_0^T S_{t-}^2V_t\dd t<\infty$. Thus
\begin{equation}
\label{eq:original-weight-integrability}
 \E^\Q\int_0^T S_{t-}V_t\dd t
 \leq\left(\E^\Q\int_0^T S_{t-}^2V_t\dd t\right)^{1/2}
      \left(\E^\Q\int_0^T V_t\dd t\right)^{1/2}<\infty.
\end{equation}
For each fixed $n$ and $w$, the deterministic coefficient $\beta_n^w$ is
bounded on $[0,T]$. Setting $W_t=S_{t-}V_t/S_0$, the envelope gives
\[
 \|\widetilde\vartheta_n^w\|_{L^2(S)}^2
 =D\E^\Q\int_0^T V_t|\beta_n^w(t)\widetilde H_n^w(t)|^2\dd t
 \leq D\|\beta_n^w\|_\infty^2\E^\Q\int_0^T W_t\dd t<\infty.
\]
Thus admissibility follows from weighted integrability of the random
envelope.

The map $\chi$ is globally $1$-Lipschitz and has modulus at most one:
on each open half-plane its real differential has operator norm at most
one, and $\chi$ is continuous across the imaginary axis. Integrating along
line segments therefore gives the Lipschitz bound.
Together with~\eqref{eq:original-transform-envelope}, this gives
\[
 |\widetilde H_n^w(t)-H_{t-}(w,T)|^2
 \leq\frac{S_{t-}}{S_0}
        \min\{4,|U_n^w(t)-U^w(t)|^2\}.
\]
For uniform convergence in $|\lambda|\leq R$, define the weight tail
$\mathcal T_W(B):=\E^\Q\int_0^T W_t\mathbf1_{\{W_t>B\}}\dd t$.
For every $B>0$,
\begin{align*}
 \E^\Q\int_0^T V_t|\widetilde H_n^w(t)-H_{t-}(w,T)|^2\dd t
 &\leq 4\mathcal T_W(B)
       +2B\E^\Q\int_0^T|U_n^w(t)-U^w(t)|\dd t.
\end{align*}
Lemma~\ref{lem:original-deterministic-stability} bounds the second term
uniformly by $C_R B\epsilon_n$, while
$\mathcal T_W(B)\to0$ as $B\to\infty$ by~\eqref{eq:original-weight-integrability}.

The deterministic coefficients $\beta_n^w$ are uniformly bounded on
compact Fourier intervals and
\[
 |\beta_n^w(t)-\beta_t^w|
 \leq\frac{|\rho|\sqrt c+\|q\|_{L^2(\nu)}\sqrt{m_2}}{D}
       |\psi_w^n(T-t)-\psi_w(T-t)|.
\]
Indeed, the difference of the two jump exponentials is bounded by
$z|\psi_w^n-\psi_w|$. Finally,
\begin{align*}
 \|\widetilde\vartheta_n^w-\vartheta^w\|_{L^2(S)}^2
 &=D\E^\Q\int_0^T V_t
       |\beta_n^w(t)\widetilde H_n^w(t)-\beta_t^wH_{t-}(w,T)|^2\dd t\\
 &\leq 2D\|\beta_n^w\|_\infty^2
       \E^\Q\int_0^T V_t|\widetilde H_n^w(t)-H_{t-}(w,T)|^2\dd t\\
 &\quad+2D\|\beta_n^w-\beta^w\|_\infty^2
       \E^\Q\int_0^T W_t\dd t.
\end{align*}
Consequently, for all sufficiently large $n$ and every $B>0$,
\begin{equation}
\label{eq:original-weighted-error-bound}
 \sup_{|\lambda|\leq R}
 \|\widetilde\vartheta_n^{w_\lambda}-\vartheta^{w_\lambda}\|_{L^2(S)}^2
 \leq C_R\bigl(\mathcal T_W(B)+B\epsilon_n+\epsilon_n^2\bigr),
\end{equation}
where $C_R$ is independent of $n$ and $B$. Taking $n\to\infty$ at fixed
$B$, then $B\to\infty$, proves~\eqref{eq:original-compact-convergence}.
For fixed $n$, continuous dependence of $\psi_w^n$ and
$F_n^w$ on $\lambda$, followed by the same memory and weighted truncation
estimates, gives continuity of
$\lambda\mapsto\widetilde\vartheta_n^{w_\lambda}$ in $L^2(S)$; the exact
holdings have the same property. This ensures strong measurability of the
Fourier integrands. Minkowski's inequality then
proves the final assertion.
\end{proof}

We next remove the Fourier cutoff and translate strategy convergence into
convergence of trading gains and hedging errors.

\begin{corollary}[Trading gains and mean-square error for a call]
\label{cor:original-call-stability}
Under the assumptions of Theorem~\ref{thm:original-compact-hedge}, set
\[
 a_{\cK}(\lambda)=\frac{\sqrt{S_0\cK}}{2\pi}
 \frac{e^{\mathrm{i}\lambda\log(S_0/\cK)}}{\lambda^2+1/4},
\]
and, for $R<\infty$, define
\begin{align}
 \vartheta_{n,R}(t)&=1-\int_{-R}^R
              a_{\cK}(\lambda)\widetilde\vartheta_n^{w_\lambda}(t)\dd\lambda,
              \label{eq:original-call-approx}\\
 x_{n,R}&=S_0-\int_{-R}^R
              a_{\cK}(\lambda)\chi(U_n^{w_\lambda}(0))\dd\lambda.
              \label{eq:original-call-capital}
\end{align}
These quantities are real by conjugate symmetry. There is a deterministic
sequence $R_n\uparrow\infty$ such that, writing
$\vartheta_n=\vartheta_{n,R_n}$ and $x_n=x_{n,R_n}$,
\[
 x_n\longrightarrow C_0(T,\cK),\qquad
 \|\vartheta_n-\vartheta^{\mathrm{call}}\|_{L^2(S)}\longrightarrow0.
\]
In particular, the original-market gain processes
$G_n(t):=\int_0^t\vartheta_n(s)\dd S_s$ and
$G^*(t):=\int_0^t\vartheta^{\mathrm{call}}(s)\dd S_s$, $0\leq t\leq T$,
satisfy $G_n\to G^*$ in $\mathcal H^2$, and
\begin{equation}
\label{eq:original-uniform-gains}
 \E^\Q\sup_{0\leq t\leq T}|G_n(t)-G^*(t)|^2
 \leq4\|\vartheta_n-\vartheta^{\mathrm{call}}\|_{L^2(S)}^2\longrightarrow0.
\end{equation}
For $Y=(S_T-\cK)^+$ and
$\varepsilon_*^2=\E^\Q[(L_T^{\mathrm{call}})^2]$, one has the exact identity
\begin{equation}
\label{eq:original-excess-mse}
 \E^\Q[(Y-x_n-G_n(T))^2]
 =\varepsilon_*^2+(x_n-C_0(T,\cK))^2
       +\|\vartheta_n-\vartheta^{\mathrm{call}}\|_{L^2(S)}^2
 \longrightarrow\varepsilon_*^2.
\end{equation}
\end{corollary}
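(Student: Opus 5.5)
The proof splits the error into a Fourier tail and a compact-interval kernel error, and then chooses $R_n$ by a diagonal argument. Set $\vartheta^{\mathrm{call}}_R:=1-\int_{-R}^Ra_{\cK}(\lambda)\vartheta^{w_\lambda}\dd\lambda$ and $x_R:=S_0-\int_{-R}^Ra_{\cK}(\lambda)H_0(w_\lambda,T)\dd\lambda$.

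\emph{Fourier tail.} By the triangle inequality in $L^2(S)$,
\[
 \|\vartheta_{n,R}-\vartheta^{\mathrm{call}}\|_{L^2(S)}
 \le \int_{-R}^R|a_{\cK}|\,\|\widetilde\vartheta_n^{w_\lambda}-\vartheta^{w_\lambda}\|_{L^2(S)}\dd\lambda
 +\int_{|\lambda|>R}|a_{\cK}|\,\|\vartheta^{w_\lambda}\|_{L^2(S)}\dd\lambda .
\]
The second term is the tail. Corollary~\ref{cor:call-hedge} (Theorem~\ref{thm:fourier-hedge}) represents $\vartheta^{\mathrm{call}}$ as the Bochner integral over $\R$. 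The proof of Theorem~\ref{thm:fourier-hedge} gives $\|\vartheta^{w_\lambda}\|_{L^2(S)}\le\|H(w_\lambda,T)\|_{\mathcal H^2}=1$. Since $a_{\cK}\in L^1(\R)$, the tail tends to $0$ as $R\to\infty$, uniformly in $n$. For the capital, $\F_0$ is trivial, so $H_0(w,T)=\exp\{U^w(0)\}$ is deterministic with modulus at most one, and the same $L^1$ tail bound controls $x_R-C_0(T,\cK)$.

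\emph{Compact part and choice of $R_n$.} By Theorem~\ref{thm:original-compact-hedge}, for fixed $R$ the first term is at most $\|a_{\cK}\|_{L^1}\sup_{|\lambda|\le R}\|\widetilde\vartheta_n^{w_\lambda}-\vartheta^{w_\lambda}\|\to0$. For the capital, $\chi$ is $1$-Lipschitz and $\chi(U^w(0))=H_0(w,T)$, because $\operatorname{Re}U^w(0)\le0$ by the envelope~\eqref{eq:original-transform-envelope} at $t=0$. Lemma~\ref{lem:original-deterministic-stability} at $t=0$ controls $\E|U_n^w(0)-U^w(0)|$, which is deterministic here. Hence $|\chi(U_n^{w}(0))-H_0(w,T)|\le C_R\epsilon_n$ uniformly for $|\lambda|\le R$, and $x_{n,R}\to x_R$. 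Let $e(n,R)$ denote the resulting combined error. Pick integers $N_1<N_2<\cdots$ such that $e(n,k)\le 1/k$ for all $n\ge N_k$, and set $R_n=k$ for $N_k\le n<N_{k+1}$. This sequence is deterministic, nondecreasing and tends to infinity (a strictly increasing version, if desired, can be chosen by interpolation). Then both the capital error and the strategy error tend to zero.

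\emph{Realness and gains.} Conjugate symmetry gives realness: $a_{\cK}(-\lambda)=\overline{a_{\cK}(\lambda)}$ and $w_{-\lambda}=\overline{w_\lambda}$. Since $\Rr(\bar w,\bar v)=\overline{\Rr(w,v)}$ and the Riccati solution is unique, $\psi^n_{\bar w}=\overline{\psi^n_w}$. Consequently $U_n$, $\chi$, $\beta_n$ and $\widetilde\vartheta_n$ conjugate under $\lambda\mapsto-\lambda$, and the symmetric integral over $[-R,R]$ is real.

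The gain statements come from the Itô isometry. Since $G_n-G^*=\int(\vartheta_n-\vartheta^{\mathrm{call}})\dd S$, we have $\|G_n-G^*\|_{\mathcal H^2}=\|\vartheta_n-\vartheta^{\mathrm{call}}\|_{L^2(S)}$. Doob's $L^2$ inequality then gives~\eqref{eq:original-uniform-gains}.

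\emph{Excess error identity.} Write
\[
 Y-x_n-G_n(T)=L_T^{\mathrm{call}}+(C_0-x_n)-(G_n(T)-G^*(T)),
\]
using $Y=C_0+G^*(T)+L_T^{\mathrm{call}}$ from the GKW decomposition. The three terms are pairwise orthogonal in $L^2(\Q)$. The constant is orthogonal to the two mean-zero martingale terminal values. $L^{\mathrm{call}}$ is strongly orthogonal to $S$, hence orthogonal to every terminal gain $\int\eta\,\dd S$ with $\eta\in L^2(S)$. Pythagoras together with the isometry gives~\eqref{eq:original-excess-mse}, and the limit follows from the first two parts.

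\emph{Main obstacle.} The main obstacle is making the diagonal choice legitimate. It needs the tail bound to hold uniformly in $n$, which it does because it involves only the exact holdings. It also needs the joint measurability and continuity in $\lambda$ of the approximate holdings, so that the $L^2(S)$ Bochner integrals and the triangle-inequality step are valid. Theorem~\ref{thm:original-compact-hedge} supplies this.
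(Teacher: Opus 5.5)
Your proposal is correct and follows essentially the same route as the paper. Both use the exact truncated pair, a tail bound uniform in $n$ from $\|\vartheta^{w_\lambda}\|_{L^2(S)}\le1$ and $|H_0(w_\lambda,T)|\le1$, fixed-cutoff convergence via Theorem~\ref{thm:original-compact-hedge} and the $1$-Lipschitz map $\chi$, a diagonal choice of $R_n$, and then isometry, Doob's inequality and the GKW Pythagorean identity; the one small fix is that the memory bound in Lemma~\ref{lem:original-deterministic-stability} is only integrated in time, so the capital estimate should instead come directly from $U_n^w(0)-U^w(0)=\int_0^T(F_n^w-F^w)(T-s)g_0(s)\dd s$, which is bounded by $\|g_0\|_{L^1(0,T)}\sup_{|\lambda|\le R}\|F_n^{w_\lambda}-F^{w_\lambda}\|_\infty\le C_R\epsilon_n$, as in the paper.
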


\begin{proof}
Let $(x_R,\vartheta_R)$ be the exact truncated pair obtained by replacing
$\chi(U_n^{w_\lambda}(0))$ and $\widetilde\vartheta_n^{w_\lambda}$ in
\eqref{eq:original-call-capital} and~\eqref{eq:original-call-approx} with
$H_0(w_\lambda,T)$ and $\vartheta^{w_\lambda}$, respectively.
The initial memory is deterministic, so the Riccati estimate gives
\[
 \sup_{|\lambda|\leq R}|U_n^{w_\lambda}(0)-U^{w_\lambda}(0)|
 \leq\|g_0\|_{L^1(0,T)}
       \sup_{|\lambda|\leq R}\|F_n^{w_\lambda}-F^{w_\lambda}\|_\infty
 \leq C_R\epsilon_n.
\]
Since $X_0=0$ and $|H_0(w_\lambda,T)|\leq1$, the affine transform gives
$\chi(U^{w_\lambda}(0))=H_0(w_\lambda,T)$.
The Lipschitz property of $\chi$ and integrability of $a_{\cK}$ therefore
give capital convergence at each fixed cutoff. Together with
Theorem~\ref{thm:original-compact-hedge}, this proves
\[
 d_n(R):=|x_{n,R}-x_R|
       +\|\vartheta_{n,R}-\vartheta_R\|_{L^2(S)}\longrightarrow0
 \qquad\text{for each fixed }R>0.
\]
On the Lewis contour,
$\|\vartheta^{w_\lambda}\|_{L^2(S)}\leq
\|e^{w_\lambda X_T}\|_{L^2(\Q)}=1$ by GKW contraction, and
$|H_0(w_\lambda,T)|\leq1$. The omitted tails of the exact hedge and
capital are each bounded by
\[
 \tau(R):=\int_{|\lambda|>R}|a_{\cK}(\lambda)|\dd\lambda
 \leq\frac{\sqrt{S_0\cK}}{\pi R},\qquad R>0.
\]
Hence
\begin{equation}
\label{eq:original-cutoff-error-bound}
 |x_{n,R}-C_0(T,\cK)|
 +\|\vartheta_{n,R}-\vartheta^{\mathrm{call}}\|_{L^2(S)}
 \leq d_n(R)+2\tau(R).
\end{equation}
Choose strictly increasing integers $N_m\geq m$ such that
$d_n(m)\leq1/m$ for every $n\geq N_m$. This is possible by fixed-cutoff
convergence. For $n\geq N_1$, set
\[
 R_n=\max\{m\in\mathbb N:N_m\leq n\},
\]
and put $R_n=1$ for the finitely many indices $n<N_1$.
Then $R_n$ is nondecreasing, $R_n\to\infty$, and
$d_n(R_n)+2\tau(R_n)\leq R_n^{-1}+2\tau(R_n)\to0$.
Equation~\eqref{eq:original-cutoff-error-bound} proves both asserted
convergences for the cutoff sequence constructed above.
The isometry and Doob's inequality give~\eqref{eq:original-uniform-gains}.
Finally, substitute the call's GKW decomposition into the left-hand side
of~\eqref{eq:original-excess-mse}. The capital difference is orthogonal to
all zero-mean martingale gains, and $L^{\mathrm{call}}$ is orthogonal to
every stock integral. All cross terms vanish, proving the identity.
\end{proof}

\begin{remark}[Regularizing the fractional kernel]
\label{rem:original-shifted-kernel}
For $K=K_\alpha$ and $\alpha\in(1/2,1)$, one concrete choice is
\[
 K_\varepsilon(t)=\frac{(t+\varepsilon)^{\alpha-1}}{\Gamma(\alpha)},
 \qquad\varepsilon\downarrow0.
\]
Each shifted kernel is completely monotone, belongs to
$L^2_{\mathrm{loc}}$, and satisfies the nonnegative, nonincreasing
resolvent conditions; see \citet[Example~2.4]{abijaber2021weak} and
\citet{gripenberg1990}. Direct integration gives
\[
 \|K_\varepsilon-K_\alpha\|_{L^1(0,T)}
 =\frac{\varepsilon^\alpha+T^\alpha-(T+\varepsilon)^\alpha}
        {\Gamma(\alpha+1)}
 \leq\frac{\varepsilon^\alpha}{\Gamma(\alpha+1)}.
\]
Thus the preceding results apply while retaining the original rough
market and its original initial curve. The bound above quantifies the
deterministic kernel perturbation, while
Corollary~\ref{cor:original-call-stability} combines it with the weighted
transform argument and the diagonal choice of Fourier cutoffs to obtain the
call-hedging convergence.
\end{remark}

\section{Numerical illustration in the original rough market}
\label{sec:original-market-numerics}

This section evaluates the approximation along the same axis as
Theorem~\ref{thm:original-compact-hedge}: one common ensemble of original
rough-market paths is held fixed while the deterministic kernel in the hedge
formula is shifted.
The organization follows two ideas from the approximation literature. First,
\citet{bayerbreneis2023weak} formulate weak rough-Heston pricing error in
terms of the $L^1$ kernel error, the same perturbation used in
Assumption~\ref{ass:original-kernel-regularization}. Second,
\citet{dinunnoyurchenko2026} use common Brownian realizations when comparing
Volterra hedge approximations. We adapt this common-random-number coupling to
evaluate every regularized holding on one fixed numerical market.

\subsection{Fixed-market design}

We use the exponential-mark calibration in \citet{bondi2024}, namely
\[
 (\alpha,\rho,b,c,\Lambda,\beta,V_0)
 =(0.527,-0.731,-1.812,0.115,0.276,0.049,0.0079),
\]
with $S_0=1$, $g_0(t)=V_0+\beta t^\alpha/\Gamma(\alpha+1)$,
$T=0.25$, and an at-the-money call. The original fractional market is
simulated once on a grid $t_k=kh$, $h=T/M$. To handle the rough square-root
state, we use the integrated-variance Euler construction of
\citet{richardtan2023} and its running-maximum clock. Writing
$A_t=\int_0^tV_s\dd s$ and
$G_0(t)=\int_0^tg_0(s)\dd s$, the grid recursion begins with
\begin{equation}
\label{eq:numerical-integrated-variance}
 A_{k+1}^h=G_0(t_{k+1})
   +h\sum_{j=0}^{k}K_\alpha(t_{k+1}-t_j)Z_j^h,
 \qquad
 \overline A_{k+1}^h=\max_{0\leq j\leq k+1}A_j^h.
\end{equation}
The Brownian martingales are advanced with conditional variance
$\Delta\overline A_{k+1}^h$. The common marked jumps are generated with the
same clock: unit-rate Poisson thresholds falling in
$(\overline A_k^h,\overline A_{k+1}^h]$ receive independent exponential
marks. This gives the discrete compensated driver
\[
 Z_{k+1}^h=(b-1)\overline A_{k+1}^h
       +\sqrt c\,M_{2,k+1}^h+J_{k+1}^h,
\]
where $J^h$ is the cumulative mark sum. The log stock is updated from the
same Brownian and marked-jump variables according to
\[
 X_k^h=(-\tfrac12-\bar q)\overline A_k^h
 +\sqrt{1-\rho^2}\,M_{1,k}^h+\rho M_{2,k}^h-\Lambda J_k^h,
 \qquad \bar q=\int_{(0,\infty)}q(z)\nu(\dd z),
\]
so the variance and stock jumps are common as in~\eqref{eq:variance-driver} and~\eqref{eq:stock-martingale}.

For every
\[
 \frac{\varepsilon}{T}\in
 \left\{\frac14,\frac18,\frac1{16},\frac1{32},\frac1{64}\right\},
 \qquad
 K_\varepsilon(t)=\frac{(t+\varepsilon)^{\alpha-1}}
                         {\Gamma(\alpha)},
\]
we solve the corresponding Riccati--Volterra equation by product integration.
The fixed original driver generates every integrated approximate history:
\begin{equation}
\label{eq:numerical-fixed-history}
 A_{\varepsilon,k}^h=G_0(t_k)
 +h\sum_{j=0}^{k-1}K_\varepsilon(t_k-t_j)Z_j^h.
\end{equation}
The predictable log memory is then advanced by the discrete counterpart
of~\eqref{eq:original-memory-semimartingale},
\[
 U_{\varepsilon,k+1}^w-U_{\varepsilon,k}^w
 =\psi_w^\varepsilon(T-t_k)\Delta Z_{k+1}^h
  -F_\varepsilon^w(T-t_k)\Delta A_{\varepsilon,k+1}^h,
\]
and is passed through the map $\chi$ in~\eqref{eq:original-clipped-transform}. Thus every
$\vartheta_{\varepsilon,R}^h(t_k)$ trades the same stock path over
$(t_k,t_{k+1}]$ and is measurable with respect to the simulation history
$\F_k^h$ through $t_k$, including $Z_0^h,\ldots,Z_k^h$.

The principal statistic is a direct quadrature of the $L^2(S)$ norm. For
$P$ paths, define
\begin{equation}
\label{eq:numerical-strategy-norm}
 \widehat d_{\varepsilon,R}^2
 =\frac1P\sum_{p=1}^P\sum_{k=0}^{M-1}
 D(S_k^{h,p})^2
 \left|\vartheta_{\varepsilon,R}^{h,p}(t_k)
             -\vartheta_{0,R}^{h,p}(t_k)\right|^2
 \Delta\overline A_{k+1}^{h,p}.
\end{equation}
The predictable bracket
$\dd\langle S\rangle=D S_{-}^2\dd A$ provides the quadrature weight
in~\eqref{eq:numerical-strategy-norm}. At a fixed grid size, the simulated
stock has a different exact bracket. Since
$a_k:=\Delta\overline A_{k+1}^h$ is $\F_k^h$-measurable, the conditional
Gaussian and compound-Poisson moment formulas give
\begin{equation}
\label{eq:numerical-discrete-bracket}
 \E^\Q[\Delta S_{k+1}^h\mid\F_k^h]=0,
 \qquad
 \E^\Q[(\Delta S_{k+1}^h)^2\mid\F_k^h]
 =(S_k^h)^2\left(e^{Da_k}-1\right).
\end{equation}
Thus $D(S_k^h)^2a_k$ is the first-order approximation to the discrete
bracket weight, since $e^{Da_k}-1=Da_k+O(a_k^2)$ as $a_k\to0$.
Equation~\eqref{eq:numerical-strategy-norm} estimates the continuous-market
strategy norm by quadrature; the exact discrete-time gain isometry uses
the exponential weight in~\eqref{eq:numerical-discrete-bracket}.
We also compute
\[
 \widehat g_{\varepsilon,R}^2
 =\frac1P\sum_{p=1}^P\max_{0\leq k\leq M}
 \left|\sum_{j<k}
  (\vartheta_{\varepsilon,R}^{h,p}(t_j)
   -\vartheta_{0,R}^{h,p}(t_j))\Delta S_{j+1}^{h,p}\right|^2.
\]
Both statistics are paired path by path. Their Monte Carlo standard errors
are computed from the corresponding pathwise quantities.

The production run uses $P=5{,}000$, $M=512$, composite Gauss--Legendre
quadrature, and Fourier cutoffs $R\in\{5,10,20,40,80\}$. At the smallest
shift, $\varepsilon=8h$. The common Brownian increments and Poisson
time-change variables are also aggregated to $M=256$ for an outer-grid check.

\subsection{Convergence results}

Table~\ref{tab:fixed-cutoff-convergence} fixes $R=20$, as in
Theorem~\ref{thm:original-compact-hedge}. The Riccati error is the supremum
over the time grid and all Fourier nodes with $|\lambda|\leq20$. The memory
column estimates the corresponding supremum over Fourier nodes of
$\E\int_0^T|U_\varepsilon^w(t)-U^w(t)|\dd t$. All five diagnostics decrease
monotonically with the exact kernel error from
Remark~\ref{rem:original-shifted-kernel}. In particular, the squared strategy
distance falls from $2.447\times10^{-6}$ to $2.170\times10^{-7}$, while the
uniform gain distance falls from $3.587\times10^{-6}$ to
$3.306\times10^{-7}$.

\begin{table}[H]
\centering
\caption{Original-market convergence at the fixed cutoff $R=20$.
Parentheses contain one Monte Carlo standard error.}
\label{tab:fixed-cutoff-convergence}
\small
\begin{tabular}{@{}rrrrrr@{}}
\toprule
$\varepsilon/T$ & $\|K_\varepsilon-K\|_1$ & Riccati & Memory
& $10^6\widehat d_{\varepsilon,20}^2$
& $10^6\widehat g_{\varepsilon,20}^2$ \\
\midrule
$1/4$  & 0.193673 & 11.305 & 0.035392 & 2.447 (0.173) & 3.587 (0.317) \\
$1/8$  & 0.146656 &  9.614 & 0.025336 & 1.411 (0.112) & 2.091 (0.221) \\
$1/16$ & 0.108279 &  8.014 & 0.017928 & 0.781 (0.065) & 1.172 (0.138) \\
$1/32$ & 0.078500 &  6.549 & 0.012658 & 0.417 (0.034) & 0.633 (0.078) \\
$1/64$ & 0.056184 &  5.251 & 0.008963 & 0.217 (0.017) & 0.331 (0.041) \\
\bottomrule
\end{tabular}
\end{table}

For the call diagonal, take the unshifted-kernel pair
$(x_{0,80},\vartheta_{0,80})$ as the numerical benchmark and measure the
combined squared capital and strategy distance to that benchmark:
\begin{equation}
\label{eq:numerical-reference-distance}
 \widehat\Delta_{\varepsilon,R}
 =(x_{\varepsilon,R}-x_{0,80})^2
 +\widehat{\|\vartheta_{\varepsilon,R}
                   -\vartheta_{0,80}\|}_{L^2(S)}^2.
\end{equation}
The strategy component uses the bracket quadrature
in~\eqref{eq:numerical-strategy-norm}. Thus the statistic measures convergence
to the benchmark at the specified Fourier cutoff and time grid. Its relation
to mean-square hedging error follows from a general benchmark identity.
For any admissible benchmark $(x_b,\vartheta_b)$ in the continuous market,
define $G_b(t):=\int_0^t\vartheta_b(s)\dd S_s$ and
$G(t):=\int_0^t\vartheta(s)\dd S_s$, $0\leq t\leq T$, and set
$e_b=Y-x_b-G_b(T)$ and
$\delta W=(x-x_b)+G(T)-G_b(T)$. Then
\begin{equation}
\label{eq:benchmark-mse-difference}
\begin{aligned}
 &\E^\Q[(Y-x-G(T))^2]-\E^\Q[e_b^2]\\
 &\qquad=(x-x_b)^2+\|\vartheta-\vartheta_b\|_{L^2(S)}^2
          -2\E^\Q[e_b\delta W].
\end{aligned}
\end{equation}
For the exact optimal benchmark, GKW orthogonality makes the cross term
vanish, yielding~\eqref{eq:original-excess-mse}.

The cutoff is increased along
$(\varepsilon/T,R)=(1/4,5),(1/8,10),(1/16,20),(1/32,40),(1/64,80)$.
Table~\ref{tab:fourier-diagonal-convergence} shows that both components
decrease, and their sum falls by more than three orders of magnitude.
These five pairs illustrate simultaneous kernel and Fourier refinement on
the fixed numerical market, with both distances evaluated against the
unshifted-kernel benchmark at $R=80$.

\begin{table}[H]
\centering
\caption{Fourier diagonal: squared capital and strategy distance to the
unshifted-kernel numerical benchmark at $R=80$.}
\label{tab:fourier-diagonal-convergence}
\small
\begin{tabular}{@{}rrrrr@{}}
\toprule
$\varepsilon/T$ & $R$ & Capital component & Strategy component
& $\widehat\Delta_{\varepsilon,R}$ \\
\midrule
$1/4$  &  5 & $2.256\times10^{-3}$ & $1.766\times10^{-4}$ & $2.432\times10^{-3}$ \\
$1/8$  & 10 & $3.359\times10^{-4}$ & $8.129\times10^{-5}$ & $4.172\times10^{-4}$ \\
$1/16$ & 20 & $3.495\times10^{-5}$ & $2.673\times10^{-5}$ & $6.168\times10^{-5}$ \\
$1/32$ & 40 & $3.602\times10^{-6}$ & $5.163\times10^{-6}$ & $8.765\times10^{-6}$ \\
$1/64$ & 80 & $1.516\times10^{-6}$ & $6.197\times10^{-7}$ & $2.136\times10^{-6}$ \\
\bottomrule
\end{tabular}
\end{table}

Figure~\ref{fig:original-market-stability} summarizes the proof-level
diagnostics, the fixed-cutoff convergence, and the call diagonal together.

\begin{figure}[H]
\centering
\includegraphics[width=\textwidth]{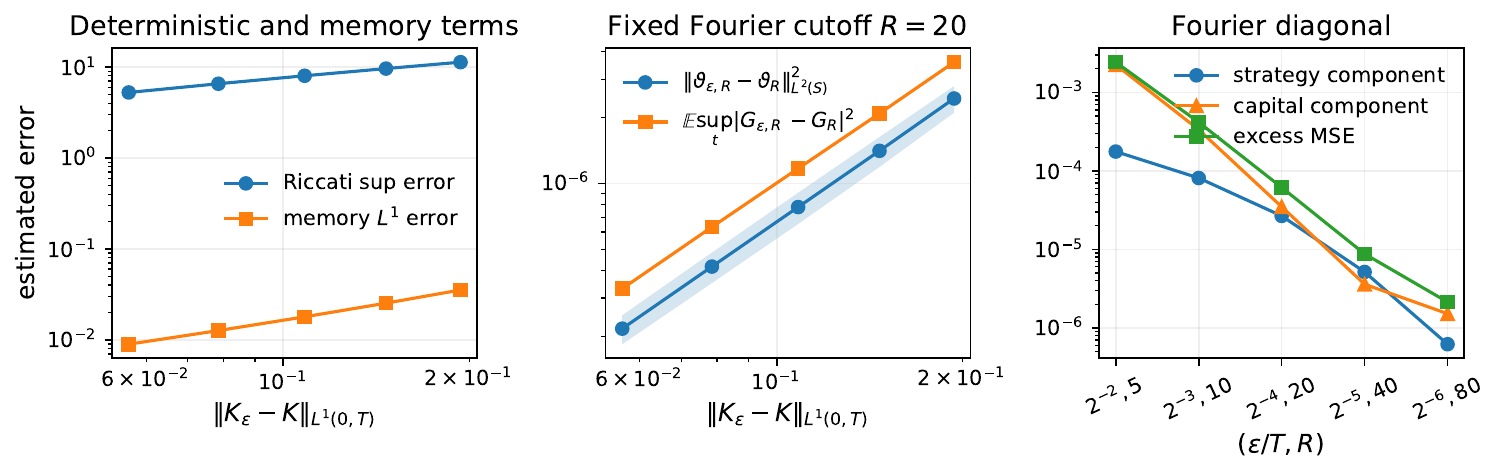}
\caption{Original-market kernel regularization. Left: deterministic Riccati
and pathwise memory diagnostics at $R=20$. Middle: the strategy and gain
distances at the same fixed cutoff; the shaded band is a pointwise 95\%
Monte Carlo interval for the strategy distance. Right: capital, strategy,
and total squared-distance components relative to the numerical benchmark
along the Fourier diagonal.}
\label{fig:original-market-stability}
\end{figure}

The outer-grid comparison gives
$\widehat d_{\varepsilon,20}^2=3.867\times10^{-7}$ at $M=256$ and
$4.171\times10^{-7}$ at $M=512$ for $\varepsilon/T=1/32$; the associated
gain statistics are $5.446\times10^{-7}$ and $6.327\times10^{-7}$.
The grid refinements change the two statistics by $7.9\%$ and $16.2\%$,
respectively. The fine-grid stock mean is $1.00078$ with standard error
$7.86\times10^{-4}$. Increasing the Gauss--Legendre order from four to six per
panel changes the five unshifted initial capitals by at most
$3.1\times10^{-6}$. These comparisons measure sensitivity to the tested
grid and quadrature refinements.

\section{Conclusion}
\label{sec:conclusion}

We have derived the variance-optimal stock hedge in the rough
Hawkes--Heston model and constructed approximations whose performance
converges in that same market. The hedge follows from the model's affine
conditional martingales and the classical GKW projection. Its explicit
Brownian and common marked-jump covariance terms identify the optimal
position, while the orthogonal residual gives the minimum quadratic error.
Fourier synthesis makes both quantities available for European calls.

The main approximation result connects regularization of the hedge formula
to convergence of trading performance with the original stock and
information flow fixed. Its proof combines predictable reconstruction of
the singular Volterra history, preservation of the transform's random
modulus envelope, and truncation of the integrable weight $S_{t-}V_t$.
Together with Riccati stability, these estimates yield convergence of the
holdings in $L^2(S)$ on compact Fourier intervals. A diagonal choice of
kernel regularization and Fourier cutoff then gives convergence of call
capitals and strategies, uniform-in-time square-mean convergence of
continuous-time gains, and convergence of terminal mean-square errors to
the original market's variance-optimal value.

Shifted fractional kernels provide a concrete family covered by the
construction. The numerical experiment illustrates the successive kernel,
memory, and trading approximations on common original-market paths. The
resulting analysis links an explicit covariance projection to admissible
approximate hedges and their limiting quadratic performance in a market
with rough memory and common jumps. Extensions to historical-measure
hedging, discrete rebalancing, and static option positions provide natural
directions for further work.

\section*{Acknowledgments}
Xiaoyu Wang is supported by the Guangzhou-HKUST(GZ) Joint Funding Program\linebreak (No.2025A03J3556).

\bibliographystyle{abbrvnat}
\bibliography{bibtex}

\end{document}